\newcommand{\bs}[1]{\boldsymbol #1}
\newtheorem{definition}{Definition}
\newtheorem{lemma}{Lemma}
\newtheorem{theorem}{Theorem}
\newtheorem{problem}{Problem}
\DeclareMathOperator{\rank}{rank}
\DeclareMathOperator{\param}{par}
\title{Optimal number of parametrized rotations and Hadamard gates in parametrized Clifford circuits with non-repeated parameters}
\author[1,2]{Vivien Vandaele}
\author[2]{Simon Perdrix}
\author[2]{Christophe Vuillot}
\affil[1]{Eviden Quantum Lab, Les Clayes-sous-Bois, France}
\affil[2]{Université de Lorraine, CNRS, Inria, LORIA, F-54000 Nancy, France}
\date{}
\begin{document}
\setbool{@fleqn}{false}
\maketitle

\begin{abstract}
    We present an efficient algorithm to reduce the number of non-Clifford gates in quantum circuits and the number of parametrized rotations in parametrized quantum circuits.
    The method consists in finding rotations that can be merged into a single rotation gate.
    This approach has already been considered before and is used as a pre-processing procedure in many optimization algorithms, notably for optimizing the number of Hadamard gates or the number of $T$ gates in Clifford$+T$ circuits.
    Our algorithm has a better complexity than similar methods and is particularly efficient for circuits with a low number of internal Hadamard gates.
    Furthermore, we show that this approach is optimal for parametrized circuits composed of Clifford gates and parametrized rotations with non-repeated parameters.
    For the same type of parametrized quantum circuits, we also prove that a previous procedure optimizing the number of Hadamard gates and internal Hadamard gates is optimal.
    This procedure is notably used in our low-complexity algorithm for optimally reducing the number of parametrized rotations.
\end{abstract}

\section{Introduction}

Non-Clifford rotation gates are typically more resource-intensive to implement fault-tolerantly than Clifford gates.
Finding efficient strategies to optimize the number of these gates is therefore crucial to reduce the overall fault-tolerant implementation cost of a quantum circuit.
That is why various methods have been developed to optimize the number of non-Clifford rotation gates, notably for Clifford$+R_Z$ circuits~\cite{zhang2019optimizing, kissinger2020reducing, heyfron2018efficient, de2020fast, ruiz2024quantum}.
Optimizing the number of non-Clifford rotation gates is also particularly important in parametrized quantum circuits.
Parametrized quantum circuits are quantum circuits that include adjustable parameters in their quantum gates.
They are central to variational quantum algorithms such as QAOA~\cite{farhi2014quantum} or VQE~\cite{peruzzo2014variational}.
In such circuits, the parameterized rotation gates are often non-Clifford gates, making their optimization crucial for improving the efficiency of these algorithms.

A simple yet efficient strategy to reduce the number of non-Clifford rotation gates in a quantum circuit is to find rotations that can be merged into a single rotation gate~\cite{zhang2019optimizing, kissinger2020reducing}.
To find the non-Clifford rotation gates which can be merged, it is useful to represent the operation performed by the circuit as a sequence of Pauli rotations followed by a final Clifford operator as done in Reference~\cite{zhang2019optimizing}.
The number of Pauli rotations in the sequence is then equal to the number of non-Clifford rotation gates in the circuit.
If the Pauli products of two Pauli rotations in the sequence are identical and if these Pauli rotations are not separated by another Pauli rotation in the sequence with which they anticommute, then they can be merged into a single Pauli rotation.
The complexity of this method, first developed in Reference~\cite{zhang2019optimizing}, mainly resides in the number of commutativity checks (i.e.\ checking whether or not two Pauli rotations commute) that must be performed.
In the worst case, the commutativity between each pair of Pauli rotations must be evaluated, leading to $\mathcal{O}(m^2)$ commutativity checks, where $m$ is the number of Pauli rotations.
In this work, we show how the number of commutativity checks can be greatly reduced.
More precisely, we present a method in which the number of commutativity checks is $\mathcal{O}(mh)$ in the worst case, where $h$ satisfies $h < m$ and is the minimal number of internal Hadamard gates required to implement the sequence of Pauli rotations.
The number of internal Hadamard gates corresponds to the number of Hadamard gates lying between the first and the last non-Clifford rotation gate of the circuit.
An algorithm to implement a sequence of Pauli rotations with a minimal number of internal Hadamard gates was presented in Reference~\cite{vandaele2024optimal}.
This algorithm is used in our method to reduce the number of commutativity checks.

We prove that our algorithm finds all the Pauli rotations which can be merged into a single Pauli rotation when the angles of the Pauli rotations are black boxes.
However, the algorithm sometimes falls short of achieving the same reduction in the number of $T$ gates as analogous algorithms for Clifford$+T$ circuits.
We propose a solution to this shortcoming which leads to an algorithm with an increased worst-case complexity but that is faster than analogous algorithms and that achieves the same reduction in the number of $T$ gates.

An interesting property of this approach is that the structure of the quantum circuit is preserved.
Only non-Clifford gates are modified: some are removed, while the angles of others are adjusted.
As such, this approach reduces the number of non-Clifford gates in the circuit without increasing the number of Clifford gates.
That is why this approach is used to reduce the number of non-Clifford gates in various quantum compilers designed for NISQ devices~\cite{amy2020staq, sivarajah2020t, martiel2022architecture}.
This way of optimizing the number of $T$ gates is outperformed by some other approaches on some quantum circuits.
However, this strategy generally leads to an important reduction in the number of $T$ gates and its execution is much faster than other approaches.
Although this approach does not affect the Clifford gates in the circuit, the reduction of the number of $T$ gates can improve the performances of other algorithms designed for optimizing the number of Clifford gates.
For example, it has been shown that the algorithm presented in Reference~\cite{vandaele2024optimal} for reducing the number of internal Hadamard gates achieves much better reductions when the number of $T$ gates have been optimized using this strategy.
Reducing the number of internal Hadamard gates is an important pre-processing step to realize before executing some other $T$-count optimizers, as fewer internal Hadamard gates result in shorter execution times, fewer $T$ gates, and a reduced need for ancillary qubits in the optimized circuit~\cite{heyfron2018efficient, de2020fast, ruiz2024quantum}.

Our low-complexity algorithm can also be used to optimize the number of parametrized rotation gates in parametrized quantum circuits.
We prove that our algorithm finds the minimal number of parametrized rotation gates when the circuit is composed of Clifford gates and parametrized rotation gates, and when each parameter can only appear once in the circuit.
Our results mainly differ from Reference~\cite{van2024optimal} in the parameter transformations considered between equivalent parametrized quantum circuits.
The equivalence between parametrized quantum circuits considered in Reference~\cite{van2024optimal} is restricted to cases where the relations between parameters are analytic functions.
In this work, we do not impose such a restriction on the relations between parameters.
A consequence of the restricted parameter transformations considered in Reference~\cite{van2024optimal} is that the following Euler angle transformation rewriting rule:
\begin{equation}
    \includegraphics[valign=c]{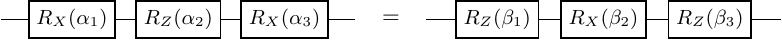}
\end{equation}
where $\beta_1, \beta_2, \beta_3$ are dependent on $\alpha_1, \alpha_2, \alpha_3$ through some trigonometric relations, is excluded as it involves discontinuous parameter transformations.
Our results show that this rewriting rule is not useful for finding an equivalent parametrized Clifford circuit with an optimal number of parametrized rotation gates and non-repeated parameters.

On the basis of this result, we also prove that the algorithm presented in Reference~\cite{vandaele2024optimal} to optimize the number of Hadamard gates and internal Hadamard gates is optimal for parametrized Clifford circuits with non-repeated parameters.
For parametrized circuits composed of Clifford gates and parametrized rotation gates, the number of internal Hadamard gates corresponds to the number of Hadamard gate lying between the first and the last parametrized rotation gate of the circuit.
Combined with our algorithm designed to optimize the number of parametrized rotation gates, this leads to a procedure for achieving an optimal number of parametrized rotation gates, Hadamard gates, and internal Hadamard gates, with complexity of $\mathcal{O}(nM + nhm)$, where $n$ is the number of qubits, $M$ is the number of gates in the initial circuit, $m$ is the initial number of parametrized rotation gates and $h$ is the optimal number of internal Hadamard gates.

In Section~\ref{sec:preliminaries}, we introduce the necessary notions and definitions to present our results.
Then, in Section~\ref{sec:merging}, we present an efficient algorithm for optimizing the number of non-Clifford rotation gates and for optimizing the number of parametrized rotation gates in parametrized quantum circuits.
The optimality of our algorithm for parametrized Clifford circuits with non-repeated parameters is proven in Section~\ref{sec:opt_parametrized}.
On the basis of this result, in Section~\ref{sec:opt_parametrized_h}, we prove that the Hadamard gate optimization algorithm presented in Reference~\cite{vandaele2024optimal} is optimal in the number of Hadamard gates and internal Hadamard gates for parametrized Clifford circuits with non-repeated parameters.
Finally, we provide benchmarks in Section~\ref{sec:bench} to evaluate the performances of our algorithm on a library of reversible logic circuits.

\section{Preliminaries}\label{sec:preliminaries}

\subsection{Pauli rotation}\label{sub:pauli_rotation}

We define the set of Pauli operators $\mathcal{P}_n$ as the set composed of all the tensor products of $n$ Pauli matrices, which are defined as follows:
$$
I = \begin{pmatrix}
1 & 0\\
0 & 1
\end{pmatrix},\quad
X = \begin{pmatrix}
0 & 1\\
1 & 0
\end{pmatrix},\quad
Y = \begin{pmatrix}
0 & -i\\
i & 0
\end{pmatrix},\quad
Z = \begin{pmatrix}
1 & 0\\
0 & -1
\end{pmatrix},
$$
with a mutiplicative factor of $\pm 1$.
We will use the term Pauli product to designate a Pauli operator deprived of its sign.
A Pauli rotation $R_P(\theta)$ is defined as follows:
$$R_P(\theta) = \exp(-i\theta P/2) = \cos(\theta/2)I - i\sin(\theta/2)P$$
for a Pauli operator $P \in \mathcal{P}_n$ and an angle $\theta \in \mathbb{R}$.
For instance, the $T$ gate is defined as a $\pi/4$ Pauli $Z$ rotation: 
$$T = R_Z(\pi/4).$$
Clifford gates can also be represented in terms of Pauli rotations.
For exemple, the CNOT gate and the Hadamard gate (denoted by $H$) are defined as follows, up to a global phase:
\begin{align*}
    \mathrm{CNOT} &= R_{ZX}(\pi/2)R_{ZI}(-\pi/2)R_{IX}(-\pi/2), \\
    H &= R_{Z}(\pi/2) R_{X}(\pi/2) R_{Z}(\pi/2).
\end{align*}
The Clifford group, denoted $\mathcal{C}_n$, is generated by the set of $\pi/2$ Pauli rotations acting on $n$ qubits:
$$\{R_P(\pi/2) \mid P \in \mathcal{P}_n\}.$$
A Pauli operator $P \in \mathcal{P}_n$ conjugated by a Clifford gate $U \in \mathcal{C}_n$ is always equal to another Pauli operator $P' \in \mathcal{P}_n$, i.e.\ $U^\dag P U = P'$. 
This fact also holds when a Pauli rotation is conjugated by $U \in \mathcal{C}_n$: 
\begin{equation}
    U^\dag R_P(\theta) U = R_{U^\dag P U }(\theta) = R_{P'}(\theta).
\end{equation}
That is why a unitary gate $U$, representing the operation performed by a quantum circuit acting on $n$ qubits and composed of Clifford gates and single-qubit rotation gates, can always be described by a sequence of Pauli rotations and a final Clifford operator $C \in \mathcal{C}_n$~\cite{gosset2014algorithm}:
\begin{equation}\label{eq:sequence_prod}
    U = e^{i \phi} C \left( \prod_{i=1}^{m} R_{P_i}(\theta_i)\right)
\end{equation}
where $m$ is the number of non-Clifford rotation gates in the circuit and $P_i$ is a Pauli product of size $n$ different from $I^{\otimes n}$, for all $i$. Let $P \in \mathcal{P}_n$ and $P' \in \mathcal{P}_n$, $P$ commutes with $P'$ if and only if their associated Pauli products commute on an even number of elements, otherwise $P$ anticommutes with $P'$.
Also, two Pauli rotations $R_P(\theta)$ and $R_{P'}(\theta')$, where $\theta$ and $\theta'$ are satisfying $\theta \neq 0 \pmod{2\pi}$ and $\theta' \neq 0 \pmod{2\pi}$, commute if and only if $P$ commutes with $P'$.

Consider the sequence of $m$ Pauli products $P_1, \ldots, P_m$ represented in Equation~\ref{eq:sequence_prod}.
If this sequence contains two Pauli products $P_i$ and $P_j$ where $i,j$ are satisfying $1 \leq i < j \leq m$ and such that $P_i = P_j$ and if there exists no $k$ satisfying $i < k < j$ and such that $P_k$ anticommutes with $P_i$, then the two Pauli rotations associated with $P_i$ and $P_j$ can be merged into a single Pauli rotation.
Merging these two Pauli rotations will form a Pauli rotation equal to $R_{P_i}(\theta_i + \theta_j)$.
This way of modifying the sequence of Pauli rotations leads to a simple and effective method to reduce the number of Pauli rotations, and therefore the number of non-Clifford rotation gates. This method was first introduced in Reference~\cite{zhang2019optimizing}.
An analogous method was independently introduced in Reference~\cite{kissinger2020reducing}.
Although the approaches taken in these references are different, it has been proven that they are in fact equivalent~\cite{simmons2021relating}.
This method of optimizing the number of Pauli rotations leads to the following Pauli rotation merging problem.

\begin{problem}[Pauli rotation merging]\label{pb:rotation_merging}
    Given a sequence of Pauli rotations $R_{P_1}(\theta_1), \ldots, R_{P_m}(\theta_m)$, find all pairs of integers $i, j$ satisfying $1 \leq i < j \leq m$, $P_i = P_j$ and such that there exists no integer $k$ satisfying $i < k < j$ such that $P_i$ anticommutes with $P_k$.
\end{problem}

In Section~\ref{sec:merging}, we present an algorithm that efficiently solves the Pauli rotation merging problem and constructs the associated optimized quantum circuit.
Benchmarks are given in Section~\ref{sec:bench} to evaluate the performances of our algorithm on Clifford$+T$ circuits.

\subsection{Hadamard gate optimization}

A method for optimizing the number of Hadamard gates in a Clifford$+R_Z$ quantum circuit consists in describing the circuit in the form of Equation~\ref{eq:sequence_prod} and then synthesizing the sequence of Pauli rotations and the final Clifford operator with a minimal number of Hadamard gates.
An optimal algorithm for this approach is presented in Reference~\cite{vandaele2024optimal}.
The algorithm also performs the synthesis of the sequence of Pauli rotations with a minimal number of internal Hadamard gates, where the number of internal Hadamard gates corresponds to the number of Hadamard gates lying between the first and the last non-Clifford $R_Z$ gate of the circuit.
Moreover, the number of internal Hadamard gate in the circuit produced by the algorithm is equal to $\rank(A)$, where $A$ is the commutativity matrix associated with the initial circuit, defined as follows.

\begin{definition}[Commutativity matrix]
    Let $P_1, \ldots, P_m$ be a sequence of Pauli products. 
    The commutativity matrix $A$ associated with this sequence of Pauli products is a strictly upper triangular Boolean matrix of size $m \times m$ such that for all integers $i, j$ satisfying $0 \leq i < j < m $:
    $$
    A_{i, j} = 
    \begin{cases}
        0 &\text{if $P_i$ commutes with $P_j$},\\ 
        1 &\text{if $P_i$ anticommutes with $P_j$}.
    \end{cases}
    $$
\end{definition}

We will say that a commutativity matrix $A$ is associated with a Clifford$+R_Z$ circuit $C$ if and only if $A$ is associated with the sequence of Pauli products obtained when $C$ is put into the form of Equation~\ref{eq:sequence_prod}.
Similarly, the number of Hadamard gate in the circuit produced by the algorithm is equal to $\rank(M)$, where $M$ is the extended commutativity matrix associated with the initial circuit, defined as follows.

\begin{definition}[Extended commutativity matrix]
    Let $C$ be a Clifford$+R_Z$ circuit, and let $C'$ be a Clifford$+R_Z$ circuit constructed from $C$ by inserting an $R_Z(\theta)$ gate, with $\theta$ not being a multiple of $\pi/2$, at the beginning and at the end of the circuit on each qubit.
    The extended commutativity matrix $M$ associated with $C$ is the commutativity matrix associated with $C'$.
\end{definition}

In Section~\ref{sec:merging}, we use the Hadamard gate optimization algorithm presented in Reference~\cite{vandaele2024optimal} to design an efficient algorithm for optimizing the number of non-Clifford $R_Z$ gates in Clifford$+R_Z$ circuits.
Then, in Section~\ref{sec:opt_parametrized_h}, we prove that the algorithm presented in Reference~\cite{vandaele2024optimal} to optimize the number of Hadamard gates and internal Hadamard gates is optimal for parametrized Clifford circuits with non-repeated parameters.

\subsection{Parametrized Clifford circuit}

We are interested in Clifford$+R_Z$ parametrized quantum circuits that do not have non-Clifford constant rotation gates.
Such parametrized quantum circuits are called parametrized Clifford circuits and are defined as follows.

\begin{definition}[Parametrized Clifford circuit]
    A parametrized Clifford circuit with parameters $\bs \alpha \in \mathbb{R}^\kappa$ is a quantum circuit composed of Clifford gates and parametrized rotation gates $R_Z(f_1(\bs \alpha)),$ $\ldots,$ $R_Z(f_m(\bs \alpha))$, where $f_i$ is any non-constant function satisfying $f_i(\bs 0) = 0$ and $f_i(\bs \alpha) \in [0, 2\pi)$ for all $\bs \alpha$.
\end{definition}

Similarly to Equation~\ref{eq:sequence_prod}, the parametrized unitary gate $U$ associated with a parametrized Clifford circuit acting on $n$ qubits and composed of parametrized rotation gates $R_Z(f_1(\bs \alpha)), \ldots, R_Z(f_m(\bs \alpha))$ can be represented by a sequence of parametrized Pauli rotations and a Clifford operator:
\begin{equation}\label{eq:sequence_prod_parametrized}
    U(\bs \alpha) = e^{i\phi} C \left( \prod_{i=1}^{m} R_{P_i}(f_i(\bs \alpha))\right)
\end{equation}
where $P_i \in \mathcal{P}_n \setminus \{\pm I^{\otimes n}\}$, $C$ is a Clifford operator, and assuming that either $R_Z(f_i(\bs \alpha))$ appears after $R_Z(f_{i+1}(\bs \alpha))$ in the circuit, or that the two gates are parallel in the circuit.
Note that the Pauli rotations in Equation~\ref{eq:sequence_prod_parametrized} are not necessarily non-Clifford as we can have Clifford parametrized rotation gates in the circuit.
A sequence of Pauli rotations associated with a parametrized Clifford circuit refers to a sequence of Pauli rotations obtained when the circuit is put into the form of Equation~\ref{eq:sequence_prod_parametrized}.

\begin{definition}
Given $S\subseteq [1..\kappa]$, let $p_S:\mathbb R^\kappa \to \mathbb R^\kappa$ be the projector on $S$ defined as, $\forall {\bs \alpha}\in \mathbb R^\kappa, \forall i\in [1..\kappa]$,  
\begin{equation}
    (p_S({\bs \alpha}))_i = \begin{cases} 
        \bs \alpha_i& \text{if $i\in S$}, \\
        0 & \text{otherwise}. \end{cases}
\end{equation}
\end{definition}
As a function may not depend on some of its parameters, we define its useful parameters as the parameters that actually have an effect on the value of a function.
\begin{definition}[Useful parameters]
    The useful parameters of a function $f: \mathbb{R}^\kappa \rightarrow \mathbb{R}$ is the smallest\footnote{such set is unique as $f\circ p_S=f$ and $f\circ p_{S'}=f$ imply that $f\circ p_{S\cap S'}=f$.} set $\param(f)$ such that  $f\circ p_{\param(f)} = f$.
\end{definition}
Note that the functions in a parametrized Clifford circuit always have a non-empty set of useful parameters as they are non-constant.
We will also be interested in particular subsets of the useful parameters of a function, referred to as minimal supports and defined as follows.
\begin{definition}[Minimal support]
A \emph{minimal support} of a function $f: \mathbb{R}^\kappa \rightarrow \mathbb{R}$ is a set $S \subseteq [1..\kappa]$ such that $f\circ p_S\neq 0$ and $\forall T\subset S$, $f\circ p_T=0$.
\end{definition}
For example, the minimal supports of the function $f(\alpha_1, \alpha_2) = \alpha_1 + \alpha_2$ are $\{1\}$ and $\{2\}$.
Whereas the only minimal support of the function $f(\alpha_1, \alpha_2) = \alpha_1 \alpha_2$ is $\{1, 2\}$.
Note that a non-constant function always have at least one minimal support.
This notion of minimal support will be useful for the study of parametrized Clifford circuit with non-repeated parameters, defined as follows.

\begin{definition}[Parametrized Clifford circuit with non-repeated parameters]
A parametrized Clifford circuit with non-repeated parameters is a parametrized Clifford circuit with parameters $\bs \alpha \in \mathbb{R}^\kappa$ and parametrized rotation gates $R_Z(f_1(\bs \alpha)), \ldots, R_Z(f_m(\bs \alpha))$ where $f_i$ and $f_j$ have disjoint sets of useful parameters for all integers $i, j$ satisfying $1 \leq i < j \leq m$.
\end{definition}

In Section~\ref{sec:opt_parametrized}, we prove that the algorithm presented in Section~\ref{sec:merging} is optimally minimizing the number of parametrized rotation gates in parametrized Clifford circuit with non-repeated parameters.
Then, in Section~\ref{sec:opt_parametrized_h}, we prove that the algorithm presented in Reference~\cite{vandaele2024optimal} is optimally minimizing the number of Hadamard gates and internal Hadamard gates in parametrized Clifford circuit with non-repeated parameters.

\section{Pauli rotation merging}\label{sec:merging}

In this section, we present efficient algorithms for optimizing the number of non-Clifford gates based on the Pauli rotation merging approach presented in Section~\ref{sub:pauli_rotation}.
For completeness, we first describe the algorithm presented in Reference~\cite{zhang2019optimizing} for optimizing the number of $T$ gates in Clifford$+T$ circuits.
We will use the term \texttt{TMerge} to denote this procedure.
The input circuit must first be put into the form of Equation~\ref{eq:sequence_prod}: a sequence of $\pi/4$ Pauli rotations associated with the Pauli products $P_1, \ldots, P_m$ and a final Clifford operator denoted by $C_f$.
Then, the algorithm proceeds as follows.
Let $U$ and $C$ be equal to the identity operator.
For $i$ going from $1$ to $m$, do the following:
\begin{enumerate}
    \item Add the Pauli rotation $C^\dag R_{P_i}(\pi/4) C$ to $U$, i.e.\ $U \leftarrow C^\dag R_{P_i}(\pi/4) C U$.
    \item Iterate over all other Pauli rotations $R_Q(\pi/4)$ in $U$ until one of the conditions is true:
    \begin{enumerate}
        \item If $Q$ anticommutes with $C^\dag P_i C$.
        \item If $Q = C^\dag P_i C$ then remove $Q$ and $C^\dag P_i C$ from $U$ and $C \leftarrow C R_{P_i}(\pi/2)$.
        \item If $Q = -C^\dag P_i C$ then remove $Q$ and $C^\dag P_i C$ from $U$.
    \end{enumerate}
\end{enumerate}
At the end, the optimized circuit is obtained by performing the synthesis of the sequence of $\pi/4$ Pauli rotations constituting $U$ and the Clifford operator $C_f C$.
Extracting the sequence of Pauli rotations and the final Clifford circuit implemented by the input circuit can be done in $\mathcal{O}(nM)$ where $n$ is the number of qubits and $M$ is the number of gates in the input circuit.
The number of commutativity checks performed by this algorithm in the step 2(a) is $\mathcal{O}(m^2)$ in the worst case.
Each commutativity check requires $\mathcal{O}(n)$ operations, therefore this algorithm has a complexity of $\mathcal{O}(nM + nm^2)$.

In this section we show how the number of commutativity checks can be greatly reduced in some cases.
In Section~\ref{sub:lowering_commutativity_checks}, we present an algorithm that efficiently solves the Pauli rotation merging problem and constructs the associated optimized quantum circuit with a complexity of $\mathcal{O}(nM + nhm)$ where $h$ is the optimal of internal Hadamard gates required to implement the initial sequence of Pauli rotations.
In Section~\ref{sub:extension_t_gates}, we present a variant of the algorithm for optimizing the number of $T$ gates in Clifford$+T$ quantum circuit.

\subsection{Lowering the number of commutativity checks}\label{sub:lowering_commutativity_checks}

In the same way as in Reference~\cite{vandaele2024optimal}, we encode a sequence of $m$ Pauli products into a block matrix $\mathcal{S} = \begin{bmatrix} \mathcal{Z} \\ \mathcal{X} \end{bmatrix}$ of size $2n \times m$ where $n$ is the number of qubits, the submatrix $\mathcal{Z}$ corresponds to the first $n$ rows of $\mathcal{S}$ and the submatrix $\mathcal{X}$ corresponds to the last $n$ rows of $\mathcal{S}$.
The value $({\mathcal{Z}}_{i,j}, \mathcal{X}_{i,j})$ represents the $i$th component of the $j$th Pauli product encoded by $\mathcal{S}$, such that the values $(0, 0), (0, 1), (1, 1)$ and $(1, 0)$ are corresponding to the Pauli matrices $I, X, Y$ and $Z$ respectively.
We use the notation $\mathcal{S}_{:,i}$ to refer to the $(i+1)$th column of the matrix $\mathcal{S}$ and the notation $\mathcal{S}_{:,:i}$ to refer to the submatrix formed by the first $(i+1)$th columns of $\mathcal{S}$.
We will say that $S_{:,i}$ commutes (or anticommutes) with $S_{:,j}$ if the Pauli products that they are respectively encoding are commuting (or anticommuting).
The commutativity matrix associated with a sequence of Pauli products encoded in $\mathcal{S}$ will be denoted $A^{(\mathcal{S})}$.
For convenience we will drop the superscript $(\mathcal{S})$ from $A$ when it is clear from the context that $A$ is associated with $\mathcal{S}$.

A rank vector $\bs v \in \mathbb{F}^m_2$ associated with a commutativity matrix $A$ satisfies $\lvert \bs v \rvert = \rank(A)$ and is defined as follows.

\begin{definition}[Rank vector]
    The rank vector $\bs v \in \mathbb{F}^m_2$ associated with a commutativity matrix $A$ of size $m\times m$ is defined as follows:
    $$
    v_i = 
    \begin{cases}
        0 &\text{if $\rank(A_{:,:i}) = \rank(A_{:,:i-1})$},\\ 
        1 &\text{otherwise},\\ 
    \end{cases}
    $$
    with $v_0 = 0$.
\end{definition}

Let $C$ be a Clifford$+R_Z$ circuit which implements a sequence of Pauli rotations, as described by Equation~\ref{eq:sequence_prod}, encoded by $\mathcal{S}$.
We say that $\bs v$ is the rank vector of $C$ if $\bs v$ is the rank vector associated with the commutativity matrix $A^{(\mathcal{S})}$.
The rank vector $\bs v$ of a circuit $C$ can be computed by Algorithm~\ref{alg:rank_vector}.
The algorithm starts by calling the \texttt{InternalHOpt} procedure presented in Reference~\cite{vandaele2024optimal}.
It has been demonstrated that the \texttt{InternalHOpt} procedure produces a circuit $C'$ with a minimal number of internal Hadamard gates such that the sequences of Pauli rotations and the final Clifford operators of $C$ and $C'$, as given by Equation~\ref{eq:sequence_prod}, are identical.
Also, the number of internal Hadamard gates in $C'$ is equal to the rank of $A^{(\mathcal{S})}$, which corresponds to the Hamming weight of the rank vector $\bs v$ associated with $\mathcal{S}$, where $\mathcal{S}$ is encoding the Pauli products of the sequence of Pauli rotations implemented by $C'$.
Furthermore, if the last $m-i-1$, such that $i$ satisfies $0 \leq i < m$, non-Clifford $R_Z$ gates of $C'$ are removed, then $C'$ implements the sequence of Pauli rotations encoded by $S_{:,:i}$ up to a final Clifford circuit with an optimal number of internal Hadamard gates equal to $\rank(A_{:,:i})$.
We can then deduce that $v_i = 1$ if and only if there is a Hadamard gate between the $i$th and $(i+1)$th non-Clifford $R_Z$ gate of $C'$.

The \texttt{InternalHOpt} procedure has a complexity of $\mathcal{O}(nM + n^2h)$ where $n$ is the number of qubits, $M$ is the number of gates in the input circuit and $h$ is the number of internal Hadamard gates in the produced circuit and satisfies $h < m$ where $m$ is the number of non-Clifford $R_Z$ gates in the input circuit.
The circuit produced by the \texttt{InternalHOpt} procedure contains $\mathcal{O}(nm + n^2)$ gates, therefore the loop in Algorithm~\ref{alg:rank_vector} can be executed in $\mathcal{O}(nm + n^2)$ operations.
Thus, the overall complexity of Algorithm~\ref{alg:rank_vector} is $\mathcal{O}(nM + n^2h)$.

\begin{algorithm}[t]
    \caption{Computes the rank vector associated with a Clifford$+R_Z$ circuit}
    \label{alg:rank_vector}
	\SetAlgoLined
	\SetArgSty{textnormal}
	\SetKwFunction{proc}{RankVector}
	\SetKwInput{KwInput}{Input}
	\SetKwInput{KwOutput}{Output}
    \KwInput{A Clifford$+R_Z$ circuit $C$ containing $m$ non-Clifford $R_Z$ gates.}
    \KwOutput{The rank vector $\bs v$ associated with $C$.}
	\SetKwProg{Fn}{procedure}{}{}
    \Fn{\proc{$C$}}{
        $C' \leftarrow \texttt{InternalHOpt}(C)$ \\
        $\bs v \leftarrow$ vector of size $m$ filled with $0$ \\
        \For{$i \gets 1$ \KwTo $m-1$}{
            \If{there is a Hadamard gate between the $i$th and $(i+1)$th $R_Z$ gate of $C'$}{
                $v_i \leftarrow 1$
            }
        }
        \Return $\bs v$
	}
\end{algorithm}

The following theorem characterizes how the information encoded in the rank vector is valuable for solving the Pauli rotation merging problem (Problem~\ref{pb:rotation_merging}) efficiently.

\begin{theorem}\label{thm:rank_vector}
    Let $\mathcal{S}$ be a sequence of $m$ Pauli products, let $A$ be its commutativity matrix, let $\bs v$ be the rank vector associated with $A$, and let $i$ and $j$ be integers satisfying $0 \leq i < j < m$.
    If $\mathcal{S}_{:,i}$ commutes with $\mathcal{S}_{:,k}$ for all integer $k$ satisfying $i < k \leq j$ and $v_k = 1$, then $\mathcal{S}_{:,i}$ commutes with $\mathcal{S}_{:,k}$ for all $k$ satisfying $i < k \leq j$.
\end{theorem}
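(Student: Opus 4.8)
The plan is to recast the statement as a purely linear-algebraic fact about the commutativity matrix $A$ over $\mathbb{F}_2$, using that for $i < k$ the phrase ``$\mathcal{S}_{:,i}$ commutes with $\mathcal{S}_{:,k}$'' means exactly $A_{i,k} = 0$. So the hypothesis reads $A_{i,k} = 0$ for every $k$ with $i < k \le j$ and $v_k = 1$, and the goal is to establish $A_{i,k} = 0$ for \emph{every} $k$ with $i < k \le j$. No reference to the symplectic structure of the Pauli products is needed beyond this translation; everything reduces to the combinatorics of $A$.

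The central step is a lemma asserting that every column of $A$ lies in the $\mathbb{F}_2$-span of the \emph{rank-increasing} columns that precede it: for all $k$, $A_{:,k} \in \operatorname{span}\{A_{:,l} : l \le k,\ v_l = 1\}$. I would prove this by strong induction on $k$, directly from the definition of the rank vector. If $v_k = 1$ the membership is immediate. If $v_k = 0$, then $\rank(A_{:,:k}) = \rank(A_{:,:k-1})$, so $A_{:,k}$ is an $\mathbb{F}_2$-combination of $A_{:,0}, \dots, A_{:,k-1}$; applying the induction hypothesis to each of these columns rewrites $A_{:,k}$ as a combination of rank-increasing columns of index strictly less than $k$.

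With the lemma available, I would fix any $k$ with $i < k \le j$ and write $A_{:,k} = \sum_{l \in L_k} A_{:,l}$ for some $L_k \subseteq \{l \le k : v_l = 1\}$. Reading off the $i$-th coordinate gives $A_{i,k} = \sum_{l \in L_k} A_{i,l}$, and I would argue that every term vanishes by splitting on the position of $l$ relative to $i$. If $l \le i$, then $A_{i,l} = 0$ because $A$ is strictly upper triangular. If $l > i$, then $i < l \le k \le j$, so $l$ lies in the range $(i,j]$ and satisfies $v_l = 1$, whence $A_{i,l} = 0$ by the hypothesis. Therefore $A_{i,k} = 0$, which is what we wanted.

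The only delicate point is the induction underlying the lemma: one must verify that the dependency of a non-increasing column propagates all the way back to rank-increasing columns, which is precisely why strong (rather than ordinary) induction is required. Once the lemma is secured, the remainder is bookkeeping with the strict upper-triangularity of $A$ together with the observation that every index $l$ contributing to $A_{i,k}$ and lying strictly above $i$ automatically remains within the interval $(i,j]$.
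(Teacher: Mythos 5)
Your proof is correct and takes essentially the same route as the paper: both express a column with $v_k=0$ as an $\mathbb{F}_2$-combination of earlier columns and then kill the $i$-th row of that combination using strict upper triangularity for indices $\ell\leq i$ and the hypothesis for rank-increasing indices $\ell>i$. The only difference is packaging: your spanning lemma (every column lies in the span of the preceding rank-increasing columns) makes explicit, via a single strong induction, the iteration over the $v_k=0$ indices that the paper leaves implicit when it treats only the smallest such $k$ and concludes with ``it follows''; if anything, your version is the more complete write-up.
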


\begin{proof}
    For all $k$ satisfying $i < k \leq j$ and $v_k = 1$, if $\mathcal{S}_{:,i}$ commutes with $\mathcal{S}_{:,k}$  then we have $A_{i, k} = 0$.
    Let $k$ be the smallest value satisfying $i < k \leq j$ and $v_k = 0$, then the following equation
    \begin{equation}\label{eq:sum_l}
        \bigoplus_{\ell \in L} A_{:,\ell} = A_{:,k}
    \end{equation}
    is satisfied for some $L \subseteq \{\ell \mid \ell < k\}$.
    We have $A_{i, \ell} = 0$ by definition if $\ell \leq i$,  and if $i < \ell < k$ then $v_\ell = 1$ and so $A_{i, \ell} = 0$ as demonstrated previously.
    Therefore, for all $\ell \in L$ we have $A_{i, \ell} = 0$ and so by Equation~\ref{eq:sum_l} we necessarily have $A_{i, k} = 0$ which means that $\mathcal{S}_{:,i}$ commutes with $\mathcal{S}_{:,k}$.
    It follows that $\mathcal{S}_{:,i}$ commutes with $\mathcal{S}_{:,k}$ for all $k$ satisfying $i < k \leq j$.
\end{proof}

\begin{algorithm}
    \caption{Pauli rotation merging algorithm}
    \label{alg:bb_merge}
	\SetAlgoLined
	\SetArgSty{textnormal}
	\SetKwFunction{proc}{BBMerge}
	\SetKwInput{KwInput}{Input}
	\SetKwInput{KwOutput}{Output}
    \KwInput{A Clifford$+R_Z$ circuit $C$ containing $m$ non-Clifford $R_Z$ gates.}
    \KwOutput{An optimized Clifford$+R_Z$ circuit equivalent to $C$.}
	\SetKwProg{Fn}{procedure}{}{}
    \Fn{\proc{$C$}}{
        $\bs v \leftarrow \texttt{RankVector}(C)$ \\
        $\bs r \leftarrow $ new empty vector \\
        $\mathcal{T} \leftarrow$ new tableau \\
        $\mathcal{S} \leftarrow$ new empty sequence of Pauli products \\
        $\mathcal{D} \leftarrow$ new empty associative array \\
        $t \leftarrow 0$ \\
        \ForEach{gate $G \in C$}{
            \uIf{$G$ is Clifford}{
                Prepend $G^\dag$ to $\mathcal{T}$ \\
            }
            \uElseIf{$G$ is a non-Clifford $R_Z(\theta)$ gate acting on qubit $i$}{
                $r_t \leftarrow \theta$ \\
                $P \leftarrow$ $i$th stabilizer generator of $\mathcal{T}$ \\
                Append $P$ to $\mathcal{S}$ \\
                Append $t$ to $\mathcal{D}[P]$ \\ 
                \uIf{the list $\mathcal{D}[P]$ contains a value not equal to $t$}{
                    $merge \leftarrow \texttt{true}$ \\
                    $j \leftarrow$ last value of the list $\mathcal{D}[P]$ not equal to $t$ \\
                    \For{$k \gets j$ \KwTo $t - 1$}{
                        \uIf{$v_k = 1$ \textbf{and} $P$ anticommutes with $\mathcal{S}_{:,k}$}{
                            $merge \leftarrow \texttt{false}$ \\
                            \texttt{break} \\
                        }
                    }
                    \uIf{$merge$ is \texttt{true}}{
                        $r_t \leftarrow r_j + r_t$ \\
                        $r_j \leftarrow 0$; \\
                        Remove $j$ from $\mathcal{D}[P]$ \\
                        \uIf{$r_t \equiv 0 \pmod{\frac{\pi}{2}}$}{
                            Prepend the $R_Z^\dag(r_t)$ gate acting on qubit $i$ to $\mathcal{T}$ \\
                            Remove $t$ from $\mathcal{D}[P]$ \\
                        }
                    }
                }
                $t \leftarrow t + 1$ \\
            }
        }
        \Return $C$ where the $i$th non-Clifford $R_Z(\theta)$ gate is replaced by $R_Z(r_i)$ for all $i$ \\
	}
\end{algorithm}

Consider the algorithm whose pseudocode is presented in Algorithm~\ref{alg:bb_merge}, and which takes a Clifford$+R_Z$ circuit $C$ as input.
The vector $\bs r$ encodes the angles of the sequence of Pauli rotations associated with $C$ such that the $i$th Pauli rotation have an angle equal to $r_i$.
A tableau $\mathcal{T}$ is a matrix composed of $4n^2 + 2n$ bits encoding the operation performed by a Clifford operator in $\mathcal{C}_n$~\cite{aaronson2004improved}.
The matrix $\mathcal{S}$ is encoding a sequence of Pauli products, as previously defined.
The variable $\mathcal{D}$ is an associative array which maps a Pauli product to a list of integers, we will use $\mathcal{D}[P]$ to refer to the list of integers associated with the Pauli product $P$.
The variable $t$ is a counter for the number of non-Clifford $R_Z$ gates that have been processed so far by the algorithm.

The main loop of the algorithm iterates over the gates of the input circuit $C$.
If the gate $G$ is Clifford then $G^\dag$ is prepended to the tableau $\mathcal{T}$, so that $\mathcal{T}$ is tracking the inverse of the circuit processed so far without the non-Clifford $R_Z$ gates.
Otherwise, if the gate $G$ is not Clifford then it must be a non-Clifford $R_Z(\theta)$ gate acting on some qubit $i$.
The Pauli rotation associated with this $R_Z(\theta)$ gate is $\tilde{C}^\dag R_Z(\theta) \tilde{C}$ where $\tilde{C}$ is the Clifford operator obtained from the input circuit $C$ by truncating it after this $R_Z(\theta)$ gate and by removing all its non-Clifford gates.
The Clifford operator $\tilde{C}^\dag$ is encoded by the tableau $\mathcal{T}$, and therefore the Pauli product $P$ associated with the Pauli rotation $R_P(\theta) = \tilde{C}^\dag R_Z(\theta) \tilde{C}$ is given by the $i$th stabilizer generator of $\mathcal{T}$.
The algorithm then determines whether or not there is a potential Pauli rotation which can be merged with $R_P(\theta)$ by checking if the list $\mathcal{D}[P]$ contains the index associated with another Pauli rotation.
If this is the case, then the corresponding index $j$ is retrieved from the list $\mathcal{D}[P]$.
The value $j$ corresponds to the $j$th Pauli rotation of $C$ when it is put into the form of Equation~\ref{eq:sequence_prod}, which is encoded by $\mathcal{S}_{:,j}$.
We can rely on Theorem~\ref{thm:rank_vector} to determine whether or not the Pauli rotations associated with $\mathcal{S}_{:,j}$ and $P$ can be merged.
If the condition of Theorem~\ref{thm:rank_vector} is not satisfied, then the Pauli rotations will not be merged, and $t$ will remain on the list $\mathcal{D}[P]$ for a potential merge with a Pauli rotation that has not yet been processed.
Otherwise, if the condition is satisfied then the two Pauli rotations can be merged, the angle of the $j$th Pauli rotation encoded by $r_j$ is set to $0$, and the angle of the Pauli rotation associated with $P$ which is encoded by $r_t$ is set to $r_j + r_t$.
Then $j$ must be removed from $\mathcal{D}[P]$ as the $j$th Pauli rotation has been merged, and the variable $t$ is updated.
In the case where $r_j + r_t$ is a multiple of $\pi/2$, then $t$ must be removed from $\mathcal{D}[P]$ as well because it is not associated with a non-Clifford Pauli rotation anymore.
When all gates within $C$ have been processed, the optimized circuit is obtained by replacing the $i$th non-Clifford $R_Z(\theta_i)$ gate from $C$ with $R_Z(r_i)$ for all $i$.\\

\noindent\textbf{Complexity analysis.}
The rank vector $\bs v$ can be computed by Algorithm~\ref{alg:rank_vector} which have a complexity of $\mathcal{O}(nM + n^2h)$ where $M$ is the number of gates in $C$, $n$ is the number of qubits and $h$ is the minimal number of internal Hadamard gates required to implement the sequence of Pauli rotations associated with $C$.
The main loop of Algorithm~\ref{alg:bb_merge} performs $M$ iterations.
Prepending a gate to the tableau $\mathcal{T}$ can be done in $\mathcal{O}(n)$ operations.
The second condition of the main loop (if $G$ is a non-Clifford $R_Z(\theta)$ gate) is evaluated to true $m$ times where $m$ is the number of non-Clifford $R_Z(\theta)$ gates in $C$.
Each operation of the associative array $\mathcal{D}$ can be done with a complexity of $\mathcal{O}(n)$.
Verifying the condition of Theorem~\ref{thm:rank_vector} induces a cost of $\mathcal{O}(nh)$ because $\lvert \bs v \rvert = h$, and so at most $\mathcal{O}(h)$ commutativity checks are performed and a commutativity check can be done in $\mathcal{O}(n)$ operations.
The loop on line 19 of Algorithm~\ref{alg:bb_merge} performs at most $m$ iterations and is executed $\mathcal{O}(m)$ times, leading to a complexity of $\mathcal{O}(m^2)$.
However this can be reduced to a complexity of $\mathcal{O}(hm)$ by using a vector storing the $h$ indices for which $v_i = 1$ instead on using $\bs v$.
All the operations performed when two Pauli rotations are merged can be done with a complexity of $\mathcal{O}(n)$.
Thus, the overall complexity of Algorithm~\ref{alg:bb_merge} is $\mathcal{O}(nM + n^2h + nhm)$, which corresponds to a complexity of $\mathcal{O}(nM + nhm)$ in the typical case where $n < m$.

This algorithm have a better complexity than the $\mathcal{O}(nM + nm^2)$ complexity of the \texttt{TMerge} procedure of Reference~\cite{zhang2019optimizing} because $h < m$.
However, in the worst case we have $h=m-1$, therefore this algorithm and the \texttt{TMerge} procedure have the same worst-case complexity of $\mathcal{O}(nM + nm^2)$.
Nonetheless, for some circuits $h$ is much smaller than $m$, which makes Algorithm~\ref{alg:bb_merge} significantly faster than the \texttt{TMerge} algorithm.
However, although this algorithm leads to the same $T$-count reduction as the \texttt{TMerge} algorithm for most Clifford$+T$ circuits evaluated in the benchmarks of Section~\ref{sec:bench}, it is not always the case for every circuit.
A simple example where Algorithm~\ref{alg:bb_merge}, called the \texttt{BBMerge} procedure, does not find all the Pauli rotations that can be merged is given in Figure~\ref{fig:suboptimality_example}.
This suboptimality is due to the fact that merging two $\pi/4$ Pauli rotations generates a Clifford $\pi/2$ Pauli rotation which modifies the sequence of Pauli rotations.
The rank vector $\bs v$ should be updated to take into account these modifications of the sequence of Pauli rotations.
However, this is not done by Algorithm~\ref{alg:bb_merge} as updating the rank vector everytime two Pauli rotations are merged would induce a much more important worst-case complexity.
As such, the \texttt{BBMerge} procedure finds all possible reductions in the number of non-Clifford $R_Z$ gates using the Pauli rotation merging approach when the angles are black boxes, but it can fail when merging Pauli rotations generates Clifford Pauli rotations.
In the next subsection we propose some modifications to our approach in order to take advantage of the reduced number of commutativity checks of Algorithm~\ref{alg:bb_merge} while obtaining the same $T$-count reduction as the \texttt{TMerge} procedure.
In Section~\ref{sec:opt_parametrized}, we prove that the \texttt{BBMerge} algorithm is optimal when executed on a parametrized Clifford circuit where the parameters are black boxes.

\begin{figure}[t]
    \centering
    \begin{subfigure}{1\textwidth} \centering
    \begin{quantikz}[column sep=0.5cm, row sep=0.35cm]
        & \gate{T} & \gate{H} & \gate{S} & \gate{T} & \gate{T} & \gate{H} & \gate{T} & \qw \\
    \end{quantikz}\vspace{-0.3cm}
    \caption{\centering Initial circuit.}
    \end{subfigure}\vspace{0.5cm}

    \begin{subfigure}{0.49\textwidth} \centering
    \begin{quantikz}[column sep=0.5cm, row sep=0.35cm]
        & \gate{T} & \gate{H} & \gate{S} & \gate{S} & \gate{H} & \gate{T} & \qw \\
    \end{quantikz}\vspace{-0.3cm}
    \caption{\centering Circuit resulting from the execution of the \texttt{BBMerge} procedure.}
    \end{subfigure}
    \begin{subfigure}{0.49\textwidth} \centering
    \begin{quantikz}[column sep=0.5cm, row sep=0.35cm]
        & \gate{H} & \gate{S} & \gate{S} & \gate{H} & \qw \\
    \end{quantikz}\vspace{-0.3cm}
    \caption{\centering Circuit resulting from the execution of the \texttt{FastTMerge} procedure.}
    \end{subfigure}

    \caption{Example of a circuit for which the \texttt{BBMerge} procedure, presented in Algorithm~\ref{alg:bb_merge}, is suboptimal.
    The initial circuit contains $4$ $T$ gates~(a), the \texttt{BBMerge} procedure produces a circuit containing $2$ $T$ gates~(b), and the \texttt{FastTMerge} procedure, presented in Algorithm~\ref{alg:fast_t_merge}, produces a circuit that doesn't contain any $T$ gate~(c).}
    \label{fig:suboptimality_example}
\end{figure}
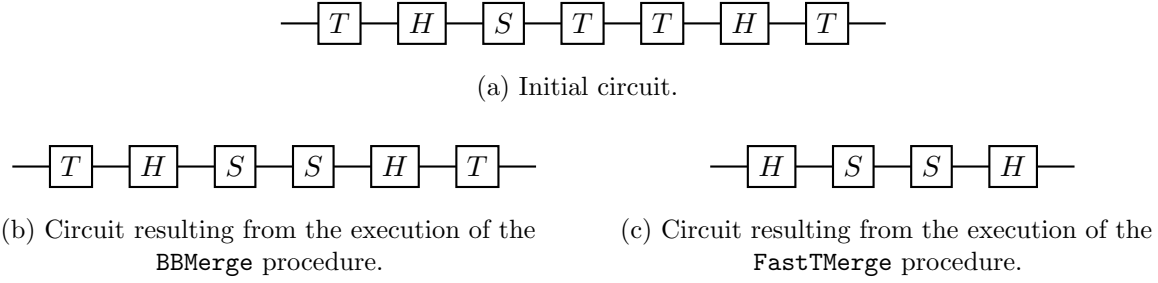

\subsection{Extension for Clifford$+T$ circuits}\label{sub:extension_t_gates}

\begin{algorithm}[tp]
    \caption{Pauli rotation merging algorithm}
    \label{alg:fast_t_merge}
	\SetAlgoLined
	\SetArgSty{textnormal}
	\SetKwFunction{proc}{FastTMerge}
	\SetKwInput{KwInput}{Input}
	\SetKwInput{KwOutput}{Output}
    \KwInput{A Clifford$+R_Z$ circuit $C$ containing $m$ non-Clifford $R_Z$ gates.}
    \KwOutput{An optimized Clifford$+R_Z$ circuit equivalent to $C$.}
	\SetKwProg{Fn}{procedure}{}{}
    \Fn{\proc{$C$}}{
        $\bs v \leftarrow \texttt{RankVector}(C)$ \\
        $\bs w \leftarrow $ copy of $\bs v$ \\
        $\bs r \leftarrow $ new empty vector \\
        $\mathcal{T} \leftarrow$ new tableau \\
        $\mathcal{S} \leftarrow$ new empty sequence of Pauli products \\
        $\mathcal{D} \leftarrow$ new empty associative array\\
        $t \leftarrow 0$\\
        \ForEach{gate $G \in C$}{
            \uIf{$G$ is Clifford}{
                Prepend $G^\dag$ to $\mathcal{T}$ \\
            }
            \uElseIf{$G$ is a non-Clifford $R_Z(\theta)$ gate acting on qubit $i$}{
                $r_t \leftarrow \theta$ \\
                $P \leftarrow$ $i$th stabilizer generator of $\mathcal{T}$ \\
                Append $P$ to $\mathcal{S}$ \\
                Append $t$ to $\mathcal{D}[P]$ \\
                \uIf{the list $\mathcal{D}[P]$ contains a value not equal to $t$}{
                    $merge \leftarrow \texttt{true}$ \\
                    $j \leftarrow$ last value of the list $\mathcal{D}[P]$ not equal to $t$ \\
                    \For{$k \gets j$ \KwTo $t-1$}{
                        \uIf{$v_k = 1$ \textbf{and} $P$ anticommutes with $\mathcal{S}_{:,k}$}{
                            \uIf{$r_k \equiv 0 \pmod{\frac{\pi}{2}}$} {
                                \For{$\ell \gets k+1$ \KwTo $t-1$}{
                                    \uIf{$w_\ell = 1$, $r_\ell \not \equiv 0 \pmod{\frac{\pi}{2}}$ \textbf{and} $P$ anticommutes with $\mathcal{S}_{:,\ell}$} {
                                        $merge \leftarrow \texttt{false}$ \\
                                        \texttt{break} \\
                                    }
                                }
                            }
                            \uElse{
                                $merge \leftarrow \texttt{false}$ \\
                            }
                            \texttt{break} \\
                        }
                    }
                    \uIf{$merge$ is \texttt{true}}{
                        \uIf{$v_j = 1$}{
                            \lFor{$k \gets j+1$ \KwTo $t-1$}{$w_k \leftarrow 1$}
                            $w_j = 0$ \\
                        }
                        $r_t \leftarrow r_t + r_j$ \\
                        $r_j \leftarrow 0$ \\
                        Remove $j$ from $\mathcal{D}[P]$ \\
                        \uIf{$r_t \equiv 0 \pmod{\frac{\pi}{2}}$}{
                            Prepend the $R_Z^\dag(r_t)$ gate acting on qubit $i$ to $\mathcal{T}$ \\
                            Remove $t$ from $\mathcal{D}[P]$ \\
                        }
                    }
                }
                $t \leftarrow t + 1$ \\
            }
        }
        \Return $C$ where the $i$th non-Clifford $R_Z(\theta)$ gate is replaced by $R_Z(r_i)$ for all $i$ \\
	}
\end{algorithm}

In this section, we present a modified version of Algorithm~\ref{alg:bb_merge}, whose pseudocode is given in Algorithm~\ref{alg:fast_t_merge}.
This algorithm achieves the same $T$-count reduction as the \texttt{TMerge} procedure while taking advantage of Theorem~\ref{thm:rank_vector} to lower the number of commutativity checks.
Algorithm~\ref{alg:fast_t_merge} uses the same data structures as Algorithm~\ref{alg:bb_merge} but with an additional vector $\bs w$ which is initialized as a copy of the rank vector $\bs v$.
Let $\mathcal{S}$ be the sequence of Pauli products constructed by Algorithm~\ref{alg:bb_merge} (and by Algorithm~\ref{alg:fast_t_merge} analogously).
As previously, we will use $\mathcal{S}_{:,i}$ to refer to the $i$th Pauli products encoded by the sequence $\mathcal{S}$.
As shown by the example of Figure~\ref{fig:suboptimality_example}, Algorithm~\ref{alg:bb_merge} will not merge two Pauli rotations associated with $\mathcal{S}_{:,i}$ and $\mathcal{S}_{:,j}$ such that $i < j$ and $\mathcal{S}_{:,i} = \mathcal{S}_{:,j}$ when there exists an integer $k$ satisfying $i < k < j$, $v_k = 1$ and $r_k \equiv 0 \pmod{\pi/2}$.
Yet, if $r_k \equiv 0 \pmod{\pi/2}$ then the $k$th Pauli rotation has already been merged with another Pauli rotation to form a Clifford Pauli rotation, and should not prevent the merge of the Pauli rotations associated with $\mathcal{S}_{:,i}$ and $\mathcal{S}_{:,j}$.
This problem could be solved by updating the rank vector $\bs v$ using Algorithm~\ref{alg:rank_vector}.
However, doing this everytime two Pauli rotations are merged would greatly increase the complexity of the algorithm.
Instead, when two Pauli rotations are merged, Algorithm~\ref{alg:fast_t_merge} updates the vector $\bs w$ such that two Pauli rotations associated with $\mathcal{S}_{:,i}$ and $\mathcal{S}_{:,j}$ where $\mathcal{S}_{:,i} = \mathcal{S}_{:,j}$ and $i < j$ can be merged if and only if $\mathcal{S}_{:,i}$ commutes with $\mathcal{S}_{:,k}$ for all $k$ satisfying $i < k < j$, $w_k = 1$ and $r_k \not \equiv 0 \pmod{\pi/2}$.
The vector $\bs w$ is simply updated by setting $w_k$ to $1$ for all $k$ satisfying $i \leq k \leq j$ when the $i$th Pauli rotation is merged with the $j$th Pauli rotation.
We can summarize this process performed by Algorithm~\ref{alg:fast_t_merge} by the following distinction of cases where $i < j$ and $\mathcal{S}_{:,i} = \mathcal{S}_{:,j}$:
\begin{itemize}
    \item If $\mathcal{S}_{:,i}$ commutes with $\mathcal{S}_{:,k}$ for all $k$ satisfying $i < k < j$ and $v_k = 1$, then, as stated by Theorem~\ref{thm:rank_vector}, the Pauli rotations associated with $\mathcal{S}_{:,i}$ and $\mathcal{S}_{:,j}$ can be merged in the same way as in Algorithm~\ref{alg:bb_merge}.
    \item If $\mathcal{S}_{:,i}$ anticommutes with $\mathcal{S}_{:,k}$ for some $k$ satisfying $i < k < j$, $v_k = 1$ and $r_k \not \equiv 0 \pmod{\pi/2}$, then the Pauli rotations associated with $\mathcal{S}_{:,i}$ and $\mathcal{S}_{:,j}$ cannot be merged.
    \item If $\mathcal{S}_{:,i}$ anticommutes with $\mathcal{S}_{:,k}$ for some $k$ satisfying $i < k < j$, $v_k = 1$ and $r_k \equiv 0 \pmod{\pi/2}$, then the algorithm will check whether or not $\mathcal{S}_{:,j}$ anticommutes with $\mathcal{S}_{:,\ell}$ for some $\ell$ satisfying $k < \ell < j$, $w_\ell = 1$ and $r_k \not \equiv 0 \pmod{\pi/2}$.
        If this condition is evaluated to true, then the Pauli rotations associated with $\mathcal{S}_{:,i}$ and $\mathcal{S}_{:,j}$ cannot be merged.
\end{itemize}

This algorithm is then finding all Pauli rotations that can be merged as the sequence of Pauli products $\mathcal{S}$ associated with the resulting sequence of Pauli rotations cannot contain $\mathcal{S}_{:,i}$ and $\mathcal{S}_{:,j}$ such that $i < j$, $\mathcal{S}_{:,i} = \mathcal{S}_{:,j}$ and $\mathcal{S}_{:,i}$ commutes with $\mathcal{S}_{:,k}$ for all $k$ satisfying $i < k < j$.\\

\noindent\textbf{Complexity analysis.}
Let $n$ be the number of qubits, $m$ be the number of non-Clifford $R_Z$ gates in the initial circuit and $M$ be the number of gates in the initial circuit.
When compared to Algorithm~\ref{alg:bb_merge}, Algorithm~\ref{alg:fast_t_merge} is composed of an additional loop (in line 23) which is called at most $m$ times and performs $\mathcal{O}(m)$ iterations.
At each iteration a commutativity check requiring $\mathcal{O}(n)$ operations is performed, therefore this loop induces $\mathcal{O}(nm^2)$ operations.
Algorithm~\ref{alg:fast_t_merge} also has another additional loop (in line 34) to update the vector $\bs w$.
This loop induces a complexity of $\mathcal{O}(hm)$ as it performs $\mathcal{O}(m)$ iterations, each iteration performs a constant number of operations and the loop is called $\mathcal{O}(h)$ times, where $h$ is the minimal number of internal Hadamard gates required to implement the sequence of Pauli rotations associated with the input circuit.
Thus, the worst-case complexity of Algorithm~\ref{alg:fast_t_merge} is $\mathcal{O}(nM + n^2h + nm^2)$, or $\mathcal{O}(nM + nm^2)$ in the typical case where $n < m$.

This complexity matches the complexity of the \texttt{TMerge} procedure of Reference~\cite{zhang2019optimizing} and is worse than the $\mathcal{O}(nM + nhm)$ complexity of Algorithm~\ref{alg:bb_merge} as $h < m$.
However, this algorithm can deal with the specific cases not handled by Algorithm~\ref{alg:bb_merge}, such as the one presented in Figure~\ref{fig:suboptimality_example}.
As such, it reduces the number of non-Clifford $R_Z$ gates as much as possible by using the Pauli rotation merging approach while trying to take advantage of Theorem~\ref{thm:rank_vector} to be more efficient by reducing the number of commutativity checks.
We present benchmarks in Section~\ref{sec:bench} to demonstrate the effectiveness of Algorithm~\ref{alg:bb_merge} and Algorithm~\ref{alg:fast_t_merge}.

\section{Optimality in the number of parametrized rotations}\label{sec:opt_parametrized}

In this section, we prove the following theorem, which states the optimality of Algorithm~\ref{alg:bb_merge} for reducing the number of parametrized rotations in a parametrized Clifford circuit with non-repeated parameters.

\begin{theorem}\label{thm:parametrized_opt}
    Let $C_o$ be the circuit produced by Algorithm~\ref{alg:bb_merge} when a parametrized Clifford circuit with non-repeated parameters is given as input.
    Then $C_o$ is a parametrized Clifford circuit with non-repeated, and the number of parametrized rotations in $C_o$ is optimal: there does not exist a parametrized Clifford circuit with non-repeated parameters equivalent to $C_o$ which contains a lower number of parametrized rotations.
\end{theorem}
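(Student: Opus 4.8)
The plan is to prove the two assertions separately: first that $C_o$ is again a parametrized Clifford circuit with non-repeated parameters, and then that its number of parametrized rotations cannot be lowered by any equivalent such circuit.

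For the first assertion I would track the only modification Algorithm~\ref{alg:bb_merge} makes to the non-Clifford data: when it merges the rotations indexed by $j$ and $t$ (which requires $\mathcal{S}_{:,j}=\mathcal{S}_{:,t}$) it replaces the two angle functions $f_j,f_t$ by the single function $f_j+f_t$ and deletes the other. Since the input has non-repeated parameters, $\param(f_j)$ and $\param(f_t)$ are disjoint and disjoint from every other $\param(f_i)$; hence $f_j+f_t$ is non-constant, vanishes at $\bs 0$, has useful-parameter set $\param(f_j)\cup\param(f_t)$, and stays disjoint from all surviving rotations (after reducing the angle modulo $2\pi$, which changes the circuit only by a global phase absorbed into $\phi$). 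Thus $C_o$ is a parametrized Clifford circuit with non-repeated parameters. Crucially, a sum of two non-constant functions on disjoint parameters is never identically a multiple of $\pi/2$, so the branch testing $r_t\equiv 0\pmod{\pi/2}$ is never taken: merging never produces a Clifford rotation and never alters the underlying sequence of Pauli products, so the failure mode of Figure~\ref{fig:suboptimality_example} does not arise. Combined with the completeness of the merging criterion guaranteed by Theorem~\ref{thm:rank_vector}, this shows that \texttt{BBMerge} performs every merge of Problem~\ref{pb:rotation_merging}, and hence that the sequence $R_{Q_1}(g_1),\dots,R_{Q_{m'}}(g_{m'})$ underlying $C_o$ is \emph{irreducible}: for every $i<j$ with $Q_i=Q_j$ there is some $i<k<j$ with $Q_k$ anticommuting with $Q_i$.

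For optimality I would first reduce to a statement about bare products of Pauli rotations. Let $C'$ be an equivalent non-repeated parametrized Clifford circuit with $p$ rotations $R_{P'_\ell}(f'_\ell)$, Clifford part $C'_c$ and phase $\phi'$. Equivalence yields a parameter relation making $C'$ reproduce the family $U(\bs\alpha)$ of $C_o$; substituting it turns the identity $C_o=C'$ into
\begin{equation}
    C\prod_{i=1}^{m'}R_{Q_i}(g_i(\bs\alpha)) = C'_c\prod_{\ell=1}^{p}R_{P'_\ell}(h_\ell(\bs\alpha)) \quad\text{for all } \bs\alpha,
\end{equation}
up to global phase, where $h_\ell$ is the pullback of $f'_\ell$. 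Evaluating at $\bs\alpha=\bs 0$ gives $C=C'_c$ (up to phase), so it suffices to prove that if the irreducible product on the left equals a product of $p$ Pauli rotations then $p\ge m'$. The engine is \emph{toggle extraction}: because the $g_i$ have pairwise disjoint supports and each attains a value $c_i\neq 0\pmod{2\pi}$, one can switch a single $g_i$ from $0$ to $c_i$ while holding all others fixed, which multiplies $U$ by a conjugate of $R_{Q_i}(c_i)$ and thereby isolates the axis $Q_i$. For \emph{distinct} axes this already closes the argument: restricting the identity to the path that varies only $\param(g_i)$ forces the right-hand product, along that path, to equal the one-parameter family $R_{Q_i}(g_i)$, which (since $c_i\neq 0\pmod{2\pi}$) forces $Q_i$ to be the axis $P'_\ell$ of some right-hand rotation; if all $m'$ axes were distinct this gives $p\ge m'$ directly. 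I would then organize the general case as an induction on $m'$, peeling one rotation at a time and using irreducibility to locate, at each stage, an axis that can be extracted and matched injectively to a right-hand rotation.

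The main obstacle is exactly the configuration that separates this work from Reference~\cite{van2024optimal}: a Pauli product repeated across an anticommuting barrier, i.e.\ $Q_i=Q_j$ with an intervening anticommuting $Q_k$, together with the fact that the parameter relation may be discontinuous and the ranges of the $g_i$ may be finite. In that regime no smoothness, derivative, or manifold-dimension argument is available (a finite-range family can be a finite set of unitaries, of dimension $0$), so the impossibility of merging $R_{Q_i}$ and $R_{Q_j}$ with fewer than the irreducible number of rotations must be established purely algebraically. I would attack this by conjugating a fixed Pauli observable through the toggled family and showing that the barrier $Q_k$ tilts the axis of the second occurrence into a genuinely different involution $R_{Q_k}(c_k)\,Q_j\,R_{Q_k}(c_k)^{\dagger}$, so that the two occurrences of $Q_i=Q_j$ demand two independent right-hand axes rather than one. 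Making this extraction rigorous without continuity — thereby ruling out the Euler-type rewriting rule as a way to lower the rotation count — is the step I expect to require the most care.
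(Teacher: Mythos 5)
Your proposal has a genuine gap at its core, and it is exactly the step your own text flags as needing ``the most care.'' The toggle-extraction engine does not work as stated: setting all parameters outside $\param(g_i)$ to zero does reduce the left-hand side to the one-parameter family $R_{Q_i}(g_i(p_{\param(g_i)}(\bs \alpha)))$, but it does \emph{not} isolate a single factor on the right-hand side. The functions $h_\ell$ have pairwise disjoint useful parameters, but each of them may intersect $\param(g_i)$, since the two circuits can partition the parameter set differently; several right-hand rotations can therefore respond to your toggle simultaneously, and the conclusion ``$Q_i$ must be the axis $P'_\ell$ of some right-hand rotation'' does not follow. The paper's proof of Lemma~\ref{lem:opt_parametrized_rotations} repairs precisely this with the notion of a \emph{minimal support}: choosing $S$ to be a minimal support of $f_1$, any second responding function $f'_j$ would admit a minimal support $T \subset S$, and projecting onto $p_T$ makes the left product the identity while the right product equals $R_{P'_j}(f'_j(p_T(\bs \alpha))) \neq I$ for some $\bs \alpha$ --- a contradiction that yields the one-to-one matching. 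Likewise, the configuration you defer (the same Pauli product repeated across an anticommuting barrier, with arbitrary discontinuous, possibly finite-range parameter relations) is not a corner case but the bulk of the paper's argument: optimality of $C'$ (respectively the fixed-point property of Algorithm~\ref{alg:bb_merge} for $C_o$) forces an intermediate anticommuting index $u$, and projecting the global identity onto $p_{T \cup U}$, for $T$ a minimal support of $f'_j$ and $U$ a minimal support of $f'_u$, produces a commutation identity contradicting the anticommutation of the barrier; this is carried out twice to get $\param(f_1) = \param(f'_i)$ and hence $R_{P_1}(f_1(\bs \alpha)) = R_{P'_i}(f'_i(\bs \alpha))$ for all $\bs \alpha$. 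Your peeling induction also silently needs what the paper proves alongside the matching, namely that the matched rotation commutes, up to a global phase, with every rotation between its two positions --- without this, removing the matched pair does not preserve equivalence of the truncated circuits, and the induction step collapses.

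A second, smaller but concrete error: the claim that ``a sum of two non-constant functions on disjoint parameters is never identically a multiple of $\pi/2$'' is false as a pointwise statement. Take $f_j$ and $f_t$ non-constant with ranges contained in $\{0, \pi\}$; then $f_j + f_t$ takes values in $\{0, \pi\} \pmod{2\pi}$, so every instance of the merged rotation is Clifford even though the function is non-constant (what is true is only that the merged function cannot be a \emph{constant} multiple of $\pi/2$, hence never becomes a fixed gate absorbable into the tableau). This matters because such $\pi$-valued rotations are exactly why the paper performs its commutativity checks up to a global phase and requires knowing whether an angle satisfies $\alpha \equiv 0 \pmod{\pi}$: two $\pi$-rotations on anticommuting Pauli products commute up to phase, enabling merges your fully-black-box reading would miss. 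So the angles in the parametrized setting are not entirely opaque, and your dismissal of the Figure~\ref{fig:suboptimality_example} failure mode needs this refinement rather than the (false) no-Clifford-rotation claim.
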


Note that the commutativity checks perform in Algorithm~\ref{alg:bb_merge} are up to a global phase.
This is important for parametrized Clifford circuit because it allows some optimizations.
For instance, the following parametrized Clifford circuit
\begin{equation}
    \includegraphics[valign=c]{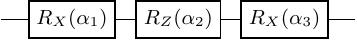}
\end{equation}
where $\alpha_1, \alpha_2, \alpha_3 \in \{0, \pi\}^3$, can be rewritten, up to a global phase, as
\begin{equation}
    \includegraphics[valign=c]{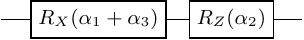}
\end{equation}
which reduces the number of parametrized rotations.
Two Pauli rotations $R_P(\alpha_1)$ and $R_{P'}(\alpha_2)$,  where $\alpha_1 \neq 0 \pmod{2\pi}$ and $\alpha_2 \neq 0 \pmod{2\pi}$, commute, up to a global phase, if and only if $P$ commutes with $P'$ or $\alpha_1 \equiv \alpha_2 \equiv \pi \pmod{2 \pi}$.
Therefore, to execute Algorithm~\ref{alg:bb_merge} on parametrized Clifford circuits, we need to determine if a parametrized rotation gate with angle $\alpha$ satisfies $\alpha = 0 \pmod{\pi}$.
That is the only knowledge required regarding the angles of the parametrized rotation gates.

Theorem~\ref{thm:parametrized_opt} is proven by the following lemma, which states that the number of parametrized rotations in the parametrized Clifford circuit with non-repeated parameters produced by Algorithm~\ref{lem:opt_parametrized_rotations} is optimal, and that all optimal circuits share the same associated sequence of Pauli rotations up to some commutative relations.

\begin{lemma}\label{lem:opt_parametrized_rotations}
    Let $C_o$ be a parametrized Clifford circuit with non-repeated parameters produced by Algorithm~\ref{alg:bb_merge}.
    Let $C'$ be a parametrized Clifford circuit with non-repeated parameters equivalent to $C_o$ and which has an optimal number of parametrized rotations.
    And let $R_{P_1}(f_1(\bs \alpha)), \ldots, R_{P_m}(f_m(\bs \alpha))$ and $R_{P'_1}(f'_1(\bs \alpha)), \ldots, R_{P'_{m'}}(f'_{m'}(\bs \alpha))$ be sequences of Pauli rotations associated with $C_o$ and $C'$ respectively.
    Then $m=m'$ and there exists a permutation $\sigma$ satisfying $R_{P_i}(f_i(\bs \alpha)) = R_{P'_{\sigma(i)}}(f'_{\sigma(i)}(\bs \alpha))$ for all $i$ and such that $R_{P'_{\sigma(i)}}(f'_{\sigma(i)}(\bs \alpha))$ commutes, up to a global phase, with $R_{P'_j}(f'_j(\bs \alpha))$ for all $\bs \alpha$ and for all integer $j$ satisfying $\sigma(i) \leq j \leq i$ and $i \leq j \leq \sigma(i)$.
\end{lemma}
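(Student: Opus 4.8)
The first step is to cancel the common Clifford part. Since both circuits are equivalent and in the form of Equation~\ref{eq:sequence_prod_parametrized}, evaluating the equivalence at $\bs\alpha = \bs 0$ makes every rotation trivial, so the two final Clifford operators coincide up to a global phase. After cancelling them, the task reduces to analysing the operator identity
\begin{equation*}
    \prod_{i=1}^{m} R_{P_i}(f_i(\bs\alpha)) = e^{i\psi}\prod_{j=1}^{m'} R_{P'_j}(f'_j(\bs\alpha)) \qquad \text{for all } \bs\alpha,
\end{equation*}
where the $f_i$ have pairwise disjoint useful parameters, and likewise the $f'_j$. The whole argument is driven by the ability to \emph{isolate} a single rotation on either side by restricting $\bs\alpha$ to a minimal support, using the projector $p_S$ and the fact that $f\circ p_S = f$ whenever $\param(f)\subseteq S$.

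\textbf{Isolating and matching rotations.} Fix $i$ and a minimal support $S$ of $f_i$. Substituting $p_S(\bs\alpha)$ annihilates every rotation of $C_o$ other than the $i$-th (their useful parameters are disjoint from $\param(f_i)\supseteq S$), leaving the single non-trivial rotation $R_{P_i}(f_i\circ p_S)$ on the left. On the right only the factors indexed by $J_S=\{j : f'_j\circ p_S\neq 0\}$ survive, and these have pairwise disjoint useful parameters inside $S$. I would first show $|J_S|=1$: if two distinct elements of $J_S$ existed, restricting further to a minimal support of one of them (contained in $S$) would leave a single non-constant $R_{P'_{j_1}}(g)$ equal to a phase times the identity, which is impossible since a non-constant angle function takes a nonzero value in $[0,2\pi)$ and hence forces $R_{P'_{j_1}}(g)\not\sim I$. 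With $J_S=\{\sigma_S(i)\}$, comparing the two single Pauli rotations (equating traces and the $P$-components, working up to global phase as the paper does) yields $P_i=P'_{\sigma_S(i)}$ and $f_i\circ p_S = f'_{\sigma_S(i)}\circ p_S$; restricting to proper subsets of $S$ further shows that $S$ is also a minimal support of $f'_{\sigma_S(i)}$. Because the useful parameters inside $C_o$ are disjoint, $i$ is the only index of $C_o$ whose useful parameters contain $S$, so running the symmetric argument from $C'$ back to $C_o$ returns $i$; the correspondence is therefore symmetric for each shared minimal support.

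\textbf{Well-definedness of $\sigma$ and the bijection.} The crux is to prove that $\sigma_S(i)$ does not depend on the chosen minimal support $S$, so that $\sigma$ becomes a genuine function and then a bijection. Suppose two minimal supports $S, S'$ of $f_i$ gave distinct indices $j\neq j'$ with $P'_j=P'_{j'}=P_i$ and $S\cap S'=\emptyset$ (forced, since $j$ and $j'$ have disjoint useful parameters in $C'$). Then $C'$ would contain two rotations with identical Pauli product $P_i$; by optimality of $C'$ they cannot be mergeable, so some rotation strictly between them in $C'$ must anticommute with $P_i$. I would then derive a contradiction from the equivalence with $C_o$: the obstructing rotation would make the overall operator depend on its parameter in a way incompatible with $C_o$, where the Pauli product $P_i$ occurs only once and is not so obstructed; hence no such obstruction can genuinely exist and the two copies would in fact be mergeable, contradicting optimality. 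This forces $\sigma_S(i)$ to be independent of $S$, giving a well-defined injection $\sigma$. Applying the same reasoning with the roles of $C_o$ and $C'$ exchanged — legitimate because $C_o$, as the output of Algorithm~\ref{alg:bb_merge}, also contains no mergeable pair in the black-box sense — produces an injection in the other direction, so $\sigma$ is a bijection, $m=m'$, and matching on every minimal support (together with the resulting equality of the minimal-support families) upgrades to the global equality $R_{P_i}(f_i(\bs\alpha)) = R_{P'_{\sigma(i)}}(f'_{\sigma(i)}(\bs\alpha))$ for all $\bs\alpha$. I expect this independence-of-$S$ step to be the main obstacle, as it is where optimality of $C'$ and the equivalence with $C_o$ must be combined rather than used in isolation.

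\textbf{The ordering condition.} Finally, the two products $\prod_i R_{P_i}(f_i)$ and $\prod_j R_{P'_j}(f'_j)$ are reorderings of the \emph{same} multiset of Pauli rotations, related by $\sigma$, and they represent the same operator. I would conclude by a standard permutation argument: whenever the relative order of two matched rotations differs between the two sequences, the operator identity forces those two rotations to commute up to a global phase, since two adjacent anticommuting Pauli rotations cannot be swapped without changing the operator even up to phase, unless both are $\pi$-rotations, which is exactly the commute-up-to-phase case recorded earlier in the paper. Carrying the rotation matched to position $\sigma(i)$ past every intermediate rotation until it reaches position $i$ then shows precisely that it commutes up to a global phase with each $R_{P'_j}(f'_j(\bs\alpha))$ for every $j$ lying between $\sigma(i)$ and $i$, which is the stated conclusion.
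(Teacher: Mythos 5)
Your toolkit is the same as the paper's (reduce to the identity $\prod_k R_{P_k}(f_k(\bs\alpha)) = \prod_k R_{P'_k}(f'_k(\bs\alpha))$, isolate single rotations by composing with projectors $p_S$ onto minimal supports, and use optimality of $C'$ together with the no-merge property of $C_o$ to produce a blocking anticommuting rotation between two occurrences of the same Pauli product), and your isolation/matching step is essentially the paper's. But the step you yourself flag as the main obstacle --- independence of $\sigma_S(i)$ from $S$ --- is left as an assertion (``the obstructing rotation would make the overall operator depend on its parameter in a way incompatible with $C_o$''), and this is precisely the technical heart of the paper's proof. The paper's mechanism is concrete: take a minimal support $T$ of the duplicate $f'_j$ and a minimal support $U$ of the blocker $f'_u$, show $f_1\circ p_U = 0$ (otherwise matching would force $P_1 = P'_u$, contradicting the anticommutation of $P'_u$ with $P'_j = P_1$), then restrict the identity to $p_{T\cup U}$ so that \emph{each side collapses to an ordered product of exactly two rotations}, forcing the blocker to commute pointwise with the duplicate and contradicting its blocking role. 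Crucially, this two-factor collapse of the $C_o$ side works only because the matched rotation is the \emph{first} one in $C_o$'s sequence; for an interior index $i$, the surviving factors interleave around position $i$ and no clean factorization is available. That is why the paper matches the first rotation, proves it commutes with every earlier rotation of $C'$, removes the matched pair, and inducts; your simultaneous global matching lacks this boundary structure. Relatedly, your claim that ``matching on every minimal support \ldots upgrades to the global equality'' is false as stated: $\alpha_1+\alpha_2$ and $\alpha_1+\alpha_2+\alpha_1\alpha_2$ agree on all minimal supports yet differ globally. What is actually needed is $\param(f_i)=\param(f'_{\sigma(i)})$ \emph{and} that no other rotation on either side uses these parameters, after which restricting to $p_{\param(f_i)}$ gives the global equality --- and establishing that exclusivity is again exactly the missing step.

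The closing ``standard permutation argument'' is circular and, as described, would fail. Carrying $R_{P'_{\sigma(i)}}$ past the intermediate rotations presupposes the very commutations you are trying to prove, and equality of two ordered products of the same multiset of Pauli rotations does not imply that one is reachable from the other by swapping adjacent commuting factors. The pairwise restriction trick (restrict to the union of the two rotations' parameter sets and compare the resulting two-factor products) only yields a constraint when the relative order of the pair genuinely differs between the two sequences; for an order-consistent pair both sides reduce to the same expression and you get a tautology, yet the statement requires commutation with \emph{every} $j$ between $i$ and $\sigma(i)$, including such pairs. The paper avoids this entirely: because the matched rotation sits at the boundary of $C_o$'s sequence, the order flips against every $j \le i$ simultaneously, so the $p_{S\cup T}$ restriction delivers the pointwise commutation for the whole prefix in one stroke, and the interval form of the statement then emerges from the induction after removal of matched pairs. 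To salvage your plan you would have to reorganize it into exactly this greedy boundary-plus-induction shape, at which point it coincides with the paper's proof.
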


\begin{proof}
    We start the proof of Lemma~\ref{lem:opt_parametrized_rotations} by showing that, for all $\bs \alpha$, $R_{P_1}(f_1(\bs \alpha)) = R_{P'_i}(f'_i(\bs \alpha))$ for some integer $i$, and that $R_{P'_i}(f'_i(\bs \alpha))$ commutes, up to a global phase, with $R_{P'_j}(f'_j(\bs \alpha))$ for all $j$ satisfying $1 \leq j \leq i$.
    Our proof will rely on the following equality, derived from Equation~\ref{eq:sequence_prod_parametrized}:
    \begin{equation}\label{eq:pauli_rotation_sequences_equality}
        \prod_{k=1}^m R_{P_k}(f_k(\bs \alpha)) = \prod_{k=1}^{m'} R_{P'_k}(f'_k(\bs \alpha))
    \end{equation}
    which holds up to a global phase for all $\bs \alpha$ because the parametrized circuits $C_o$ and $C'$ are equivalent.
    Let $S$ be a minimal support of $f_1$ and let $i$ be an integer such that $f'_i(p_S(\bs \alpha)) \neq 0$ for some $\bs \alpha$.
    Note that such a function $f'_i$ necessarily exists, otherwise, because $S$ is a minimal support of $f_1$ and all the $f$ functions have disjoint sets of useful parameters, we would have
    \begin{equation}
        \prod_{k=1}^m R_{P_k}(f_k(p_S(\bs \alpha))) = R_{P_1}(f_1(p_S(\bs \alpha))) \neq I
    \end{equation}
    and
    \begin{equation}
        \prod_{k=1}^{m'} R_{P'_k}(f'_k(p_S(\bs \alpha))) = I
    \end{equation}
    for some $\bs \alpha$, which contradicts Equation~\ref{eq:pauli_rotation_sequences_equality}.

    We first prove that
    \begin{equation} \label{eq:i_minimal_support}
        \prod_{k=1}^{m'} R_{P'_k}(f'_k(p_S(\bs \alpha))) = R_{P'_i}(f'_i(p_S(\bs \alpha)))
    \end{equation}
    for all $\bs \alpha$.
    Let's assume that there exist an integer $j$ and a vector $\bs \alpha$ such that $j \neq i$ and $f'_j(p_S(\bs \alpha)) \neq 0$.
    The set $S$ is not a minimal support of $f'_j$ because there exists $\bs \alpha$ such that $f'_i(p_S(\bs \alpha)) \neq 0$ and $f'_i$ and $f'_j$ have disjoint sets of useful parameters.
    It entails that there exists a minimal support $T$ of $f'_j$ such that $T \subset S$.
    Moreover, $S$ is a minimal support of $f_1$, and so $f_k(p_T(\bs \alpha)) = 0$ for all $\bs \alpha$ and for all $k$ satisfying $1 \leq k \leq m$.
    Then, there exists an $\bs \alpha$ such that
    \begin{equation}
        \prod_{k=1}^m R_{P_k}(f_k(p_T(\bs \alpha))) = I
    \end{equation}
    and
    \begin{equation}
        \prod_{k=1}^{m'} R_{P'_k}(f'_k(p_T(\bs \alpha))) = R_{P'_j}(f'_j(p_T(\bs \alpha)) \neq I
    \end{equation}
    which contradicts Equation~\ref{eq:pauli_rotation_sequences_equality}.
    Therefore, by contradiction, there do not exist $j$ and $\bs \alpha$ such that $j \neq i$ and $f'_j(p_S(\bs \alpha)) \neq 0$.
    Thus, Equation~\ref{eq:i_minimal_support} holds, implying that, based on Equation~\ref{eq:pauli_rotation_sequences_equality}, we have
    \begin{equation}\label{eq:opt_parametrized_lem_8}
        R_{P_1}(f_1(p_S(\bs \alpha))) = R_{P'_i}(f'_i(p_S(\bs \alpha)))
    \end{equation}
    for all $\bs \alpha$, and so $P_1$ = $P'_i$ and
    \begin{equation}
        f_1(p_S(\bs \alpha)) = f'_i(p_S(\bs \alpha))
    \end{equation}
    for all $\bs \alpha$.

    We now prove that $R_{P'_i}(f'_i(\bs \alpha))$ commutes, up to a global phase, with $R_{P'_j}(f'_j(\bs \alpha))$ for all $\bs \alpha$ and for all integer $j$ satisfying $1 \leq j \leq i$.
    Let $j$ be an integer satisfying $1 \leq j < i$ and such that $R_{P'_j}(f'_j(\bs \alpha))$ anticommutes with $R_{P'_i}(f'_i(\bs \alpha))$ for some $\bs \alpha$.
    Let $T$ be a minimal support of $f'_j$ and let's assume that there exists $\bs \alpha$ such that $f_1(p_T(\bs \alpha)) \neq 0$.
    Then there would exist a minimal support $U$ of $f_1$ such that $U \subseteq T$.
    Based on Equation~\ref{eq:pauli_rotation_sequences_equality}, we then have
    \begin{equation}
        R_{P_1}(f_1(p_U(\bs \alpha))) = R_{P'_j}(f'_j(p_U(\bs \alpha)))
    \end{equation}
    for all $\bs \alpha$, which entails $P_1 = P'_j$.
    However, we also have $P_1 = P'_i$, which implies $P'_i = P'_j$.
    This contradicts our assumption that $R_{P'_j}(f'_j(\bs \alpha))$ anticommutes with $R_{P'_i}(f'_i(\bs \alpha))$ for some $\bs \alpha$.
    Therefore, by contradiction, we have $f_1(p_T(\bs \alpha)) = 0$ for all $\bs \alpha$.
    It follows that, based on Equation~\ref{eq:pauli_rotation_sequences_equality}, we have
    \begin{equation}\label{eq:opt_parametrized_lem_4}
        \prod_{k=1}^m R_{P_k}(f_k(p_T(\bs \alpha))) = \prod_{k=2}^m R_{P_k}(f_k(p_T(\bs \alpha))) = \prod_{k=1}^{m'} R_{P'_k}(f'_k(p_T(\bs \alpha))) = R_{P'_j}(f'_j(p_T(\bs \alpha)))
    \end{equation}
    for all $\bs \alpha$, because $T \subseteq \param(f'_j)$ and all the $f'$ functions have disjoint sets of useful parameters.
    It entails
    \begin{equation}
        T \subseteq \bigcup_{k=2}^m \param(f_k)
    \end{equation}
    and so 
    \begin{equation}
        T \cap \param(f_1) = \emptyset
    \end{equation}
    because all the $f$ functions have disjoint sets of useful parameters.
    It implies that 
    \begin{equation}\label{eq:opt_parametrized_lem_0}
        R_{P_1}(f_1(p_{S\cup T}(\bs \alpha))) = R_{P_1}(f_1(p_S(\bs \alpha)))
    \end{equation}
    for all $\bs \alpha$.
    Moreover, we have
    \begin{equation}\label{eq:opt_parametrized_lem_1}
        R_{P_k}(f_k(p_{S\cup T}(\bs \alpha))) = R_{P_k}(f_k(p_T(\bs \alpha)))
    \end{equation}
    for all integer $k$ satisfying $1 < k \leq m$, because $S \subseteq \param(f_1)$ and $f_1$ and $f_k$ have disjoint sets of useful parameters.
    Based on Equations~\ref{eq:opt_parametrized_lem_4},~\ref{eq:opt_parametrized_lem_0} and~\ref{eq:opt_parametrized_lem_1}, we have
    \begin{equation}\label{eq:opt_parametrized_lem_5}
        \begin{aligned}
        \prod_{k=1}^{m} R_{P_k}(f_k(p_{S \cup T}(\bs \alpha))) &= R_{P_1}(f_1(p_S(\bs \alpha))) \prod_{k=2}^{m} R_{P_k}(f_k(p_T(\bs \alpha))) \\
        &= R_{P_1}(f_1(p_S(\bs \alpha))) R_{P'_j}(f'_j(p_T(\bs \alpha)))
        \end{aligned}
    \end{equation}
    for all $\bs \alpha$.
    Furthermore, we have
    \begin{equation}\label{eq:opt_parametrized_lem_6}
        \prod_{k=1}^{m'} R_{P'_k}(f'_k(p_{S \cup T}(\bs \alpha))) = R_{P'_j}(f'_j(p_T(\bs \alpha))) R_{P'_i}(f'_i(p_S(\bs \alpha)))
    \end{equation}
    for all $\bs \alpha$, because $j<i$, $T \subseteq \param(f'_j)$, $S \subseteq \param(f'_i)$ and all $f'$ functions have disjoint sets of useful parameters.
    According to Equation~\ref{eq:pauli_rotation_sequences_equality}, Equations~\ref{eq:opt_parametrized_lem_5} and~\ref{eq:opt_parametrized_lem_6} should be equal:
    \begin{equation}\label{eq:opt_parametrized_lem_7}
        R_{P_1}(f_1(p_S(\bs \alpha))) R_{P'_j}(f'_j(p_T(\bs \alpha))) = R_{P'_j}(f'_j(p_T(\bs \alpha))) R_{P'_i}(f'_i(p_S(\bs \alpha)))
    \end{equation}
    for all $\bs \alpha$.
    Using Equation~\ref{eq:opt_parametrized_lem_8}, Equation~\ref{eq:opt_parametrized_lem_7} can be rewritten as
    \begin{equation}\label{eq:opt_parametrized_lem_9}
        R_{P'_i}(f'_i(p_S(\bs \alpha))) R_{P'_j}(f'_j(p_T(\bs \alpha))) = R_{P'_j}(f'_j(p_T(\bs \alpha))) R_{P'_i}(f'_i(p_S(\bs \alpha)))
    \end{equation}
    for all $\bs \alpha$.
    Equation~\ref{eq:opt_parametrized_lem_9} indicates that $R_{P'_i}(f'_i(\bs \alpha))$ commutes with $R_{P'_j}(f'_j(\bs \alpha))$ for all $\bs \alpha$.
    This contradicts our assumption that $R_{P'_j}(f'_j(\bs \alpha))$ anticommutes with $R_{P'_i}(f'_i(\bs \alpha))$ for some $\bs \alpha$.
    Therefore, by contradiction, $R_{P'_i}(f'_i(\bs \alpha))$ commutes, up to a global phase, with $R_{P'_j}(f'_j(\bs \alpha))$ for all $\bs \alpha$ and for all integer $j$ satisfying $1 \leq j \leq i$.

    We now prove that 
    \begin{equation}
        R_{P_1}(f_1(\bs \alpha)) = R_{P'_i}(f'_i(\bs \alpha))
    \end{equation}
    for all $\bs \alpha$.
    To do so, we will prove that
    \begin{equation}
        \param(f_1) = \param(f'_i)
    \end{equation}
    by showing that $\param(f'_i) \subseteq \param(f_1)$ and $\param(f_1) \subseteq \param(f'_i)$.

    Let's assume that there exists an integer $j$ and an $\bs \alpha$ such that $j \neq i$ and $f'_j(p_{\param(f_1)}(\bs \alpha)) \neq 0$.
    Let $T$ be a minimal support of $f'_j$ such that $T \subseteq \param(f_1)$.
    Based on Equation~\ref{eq:pauli_rotation_sequences_equality}, and because all functions $f$ and $f'$ have disjoint sets of useful parameters, we have 
    \begin{equation}\label{eq:opt_parametrized_lem_18}
        R_{P_1}(f_1(p_T(\bs \alpha))) = R_{P'_j}(f'_j(p_T(\bs \alpha)))
    \end{equation}
    for all $\bs \alpha$, which implies that $P_1 = P'_j = P'_i$ because $R_{P'_j}(f'_j(p_T(\bs \alpha))) \neq I$ for some $\bs \alpha$.
    Let $u$ be an integer between $i$ and $j$ such that the Pauli rotation $R_{P'_u}(f'_u(\bs \alpha))$ anticommutes with $R_{P'_j}(f'_j(\bs \alpha))$ for some $\bs \alpha$.
    Notice that $u$ necessarily exists as otherwise the Pauli rotations $R_{P'_i}(f'_i(\bs \alpha))$ and $R_{P'_j}(f'_j(\bs \alpha))$ could be merged into a single Pauli rotation because $P'_i = P'_j$, which would contradict the fact that $C'$ has an optimal number of parametrized rotations.
    Also, $i, u$ and $j$ must satisfy $i < u < j$ because we proved that $R_{P'_i}(f'_i(\bs \alpha))$ commutes with $R_{P'_k}(f'_k(\bs \alpha))$ for all $\bs \alpha$ and $k$ satisfying $1 \leq k < i$.
    Let $U$ be a minimal support of $f'_u$, and let's assume that there exists $\bs \alpha$ such that $f_1(p_U(\bs \alpha)) \neq 0$.
    Then there would exists a minimal support $V$ of $f_1$ such that $V \subseteq U$.
    Based on Equation~\ref{eq:pauli_rotation_sequences_equality}, we then have
    \begin{equation}\label{eq:opt_parametrized_lem_10}
        R_{P_1}(f_1(p_V(\bs \alpha))) = R_{P'_u}(f'_u(p_V(\bs \alpha)))
    \end{equation}
    for all $\bs \alpha$.
    Because there exists $\bs \alpha$ such that $f_1(p_V(\bs \alpha)) \neq 0$, Equation~\ref{eq:opt_parametrized_lem_10} implies that $P_1 = P'_u$.
    However, we also have $P_1 = P'_j$ and $P'_j \neq P_u$ because there exists an $\bs \alpha$ such that $R_{P'_j}(f'_j(\bs \alpha))$ anticommutes with $R_{P'_u}(f'_u(\bs \alpha))$.
    Therefore, by contradiction, we have $f_1(p_U(\bs \alpha)) = 0$ for all $\bs \alpha$.
    It follows that, based on Equation~\ref{eq:pauli_rotation_sequences_equality}, we have
    \begin{equation}\label{eq:opt_parametrized_lem_14}
        \prod_{k=1}^m R_{P_k}(f_k(p_U(\bs \alpha))) = \prod_{k=2}^m R_{P_k}(f_k(p_U(\bs \alpha))) = \prod_{k=1}^{m'} R_{P'_k}(f'_k(p_U(\bs \alpha))) = R_{P'_u}(f'_u(p_U(\bs \alpha)))
    \end{equation}
    for all $\bs \alpha$, because $U \subseteq \param(f'_u)$ and all the $f'$ functions have disjoint sets of useful parameters.
    It entails
    \begin{equation}
        U \subseteq \bigcup_{k=2}^m \param(f_k)
    \end{equation}
    and so 
    \begin{equation}
        U \cap \param(f_1) = \emptyset
    \end{equation}
    because all the $f$ functions have disjoint sets of useful parameters.
    It implies that 
    \begin{equation}\label{eq:opt_parametrized_lem_11}
        R_{P_1}(f_1(p_{T\cup U}(\bs \alpha))) = R_{P_1}(f_1(p_T(\bs \alpha)))
    \end{equation}
    for all $\bs \alpha$.
    Moreover, we have
    \begin{equation}\label{eq:opt_parametrized_lem_12}
        R_{P_k}(f_k(p_{T\cup U}(\bs \alpha))) = R_{P_k}(f_k(p_U(\bs \alpha)))
    \end{equation}
    for all integer $k$ satisfying $1 < k \leq m$, because $T \subseteq \param(f_1)$ and $f_1$ and $f_k$ have disjoint sets of useful parameters.
    Based on Equations~\ref{eq:opt_parametrized_lem_14},~\ref{eq:opt_parametrized_lem_11} and~\ref{eq:opt_parametrized_lem_12}, we have
    \begin{equation}\label{eq:opt_parametrized_lem_13}
        \begin{aligned}
            \prod_{k=1}^{m} R_{P_k}(f_k(p_{T \cup U}(\bs \alpha))) &= R_{P_1}(f_1(p_T(\bs \alpha))) \prod_{k=2}^{m} R_{P_k}(f_k(p_U(\bs \alpha))) \\
            &= R_{P_1}(f_1(p_T(\bs \alpha))) R_{P'_u}(f'_u(p_U(\bs \alpha)))
        \end{aligned}
    \end{equation}
    for all $\bs \alpha$.
    Furthermore, we have
    \begin{equation}\label{eq:opt_parametrized_lem_15}
        \prod_{k=1}^{m'} R_{P'_k}(f'_k(p_{T \cup U}(\bs \alpha))) = R_{P'_u}(f'_u(p_U(\bs \alpha))) R_{P'_j}(f'_j(p_T(\bs \alpha)))
    \end{equation}
    for all $\bs \alpha$, because $u<j$, $U \subseteq \param(f'_u)$, $T \subseteq \param(f'_j)$ and all $f'$ functions have disjoint sets of useful parameters.
    According to Equation~\ref{eq:pauli_rotation_sequences_equality}, Equations~\ref{eq:opt_parametrized_lem_13} and~\ref{eq:opt_parametrized_lem_15} should be equal:
    \begin{equation}\label{eq:opt_parametrized_lem_16}
        R_{P_1}(f_1(p_T(\bs \alpha))) R_{P'_u}(f'_u(p_U(\bs \alpha))) = R_{P'_u}(f'_u(p_U(\bs \alpha))) R_{P'_j}(f'_j(p_T(\bs \alpha)))
    \end{equation}
    for all $\bs \alpha$.
    Using Equation~\ref{eq:opt_parametrized_lem_18}, Equation~\ref{eq:opt_parametrized_lem_16} can be rewritten as
    \begin{equation}\label{eq:opt_parametrized_lem_17}
        R_{P'_j}(f'_j(p_T(\bs \alpha))) R_{P'_u}(f'_u(p_U(\bs \alpha))) = R_{P'_u}(f'_u(p_U(\bs \alpha))) R_{P'_j}(f'_j(p_T(\bs \alpha)))
    \end{equation}
    for all $\bs \alpha$.
    Equation~\ref{eq:opt_parametrized_lem_17} indicates that $R_{P'_u}(f'_u(\bs \alpha))$ commutes with $R_{P'_j}(f'_j(\bs \alpha))$ for all $\bs \alpha$.
    This contradicts the fact that $R_{P'_u}(f'_u(\bs \alpha))$ anticommutes with $R_{P'_j}(f'_j(\bs \alpha))$ for some $\bs \alpha$.
    Therefore, by contradiction, there does not exist an integer $j$ and an $\bs \alpha$ such that $j \neq i$ and $f'_j(p_{\param(f_1)}(\bs \alpha)) \neq 0$.
    Thus, based on Equation~\ref{eq:pauli_rotation_sequences_equality},
    \begin{equation}
        R_{P_1}(f_1(p_{\param(f_1)}(\bs \alpha))) = R_{P'_i}(f'_i(p_{\param(f_1)}(\bs \alpha)))
    \end{equation}
    for all $\bs \alpha$, which implies $\param(f'_i) \subseteq \param(f_1)$.

    Let's assume that there exists an integer $j$ and an $\bs \alpha$ such that $1 < j$ and $f_j(p_{\param(f'_i)}(\bs \alpha)) \neq 0$.
    Let $T$ be a minimal support of $f_j$ such that $T \subseteq \param(f'_i)$.
    Based on Equation~\ref{eq:pauli_rotation_sequences_equality}, and because all functions $f$ and $f'$ have disjoint sets of useful parameters, we have 
    \begin{equation}\label{eq:opt_parametrized_lem_18_b}
        R_{P_j}(f_j(p_T(\bs \alpha))) = R_{P'_i}(f'_i(p_T(\bs \alpha)))
    \end{equation}
    for all $\bs \alpha$, which implies that $P'_i = P_j = P_1$ because $R_{P_j}(f_j(p_T(\bs \alpha))) \neq I$ for some $\bs \alpha$.
    Let $u$ be an integer satisfying $1 < u < j$ such that the Pauli rotation $R_{P_u}(f_u(\bs \alpha))$ anticommutes with $R_{P_j}(f_j(\bs \alpha))$ for some $\bs \alpha$.
    Notice that $u$ necessarily exists as otherwise the Pauli rotations $R_{P_1}(f_1(\bs \alpha))$ and $R_{P_j}(f_j(\bs \alpha))$ could be merged into a single Pauli rotation because $P_i = P_j$, which would contradict the fact that $C_o$ is a circuit produced by Algorithm~\ref{alg:bb_merge}.
    Let $U$ be a minimal support of $f_u$, and let's assume that there exists $\bs \alpha$ such that $f'_i(p_U(\bs \alpha)) \neq 0$.
    Then there would exists a minimal support $V$ of $f'_i$ such that $V \subseteq U$.
    Based on Equation~\ref{eq:pauli_rotation_sequences_equality}, we then have
    \begin{equation}\label{eq:opt_parametrized_lem_10_b}
        R_{P_u}(f_u(p_V(\bs \alpha))) = R_{P'_i}(f'_i(p_V(\bs \alpha)))
    \end{equation}
    for all $\bs \alpha$.
    Because there exists $\bs \alpha$ such that $f'_i(p_V(\bs \alpha)) \neq 0$, Equation~\ref{eq:opt_parametrized_lem_10_b} implies that $P'_i = P_u$.
    However, we also have $P'_i = P_j$ and $P_j \neq P_u$ because there exists an $\bs \alpha$ such that $R_{P_j}(f_j(\bs \alpha))$ anticommutes with $R_{P_u}(f_u(\bs \alpha))$.
    Therefore, by contradiction, we have $f'_i(p_U(\bs \alpha)) = 0$ for all $\bs \alpha$.
    It follows that, based on Equation~\ref{eq:pauli_rotation_sequences_equality}, we have
    \begin{equation}\label{eq:opt_parametrized_lem_14_b}
        \prod_{k=1}^{m'} R_{P'_k}(f'_k(p_U(\bs \alpha))) = \prod_{k=1, k\neq i}^{m'} R_{P'_k}(f'_k(p_U(\bs \alpha))) = \prod_{k=1}^{m} R_{P_k}(f_k(p_U(\bs \alpha))) = R_{P_u}(f_u(p_U(\bs \alpha)))
    \end{equation}
    for all $\bs \alpha$, because $U \subseteq \param(f_u)$ and all the $f$ functions have disjoint sets of useful parameters.
    It entails
    \begin{equation}
        U \subseteq \bigcup_{k=2}^m \param(f'_k)
    \end{equation}
    and so 
    \begin{equation}
        U \cap \param(f'_i) = \emptyset
    \end{equation}
    because all the $f$ functions have disjoint sets of useful parameters.
    It implies that 
    \begin{equation}\label{eq:opt_parametrized_lem_11_b}
        R_{P'_i}(f'_i(p_{T\cup U}(\bs \alpha))) = R_{P'_i}(f'_i(p_T(\bs \alpha)))
    \end{equation}
    for all $\bs \alpha$.
    Moreover, we have
    \begin{equation}\label{eq:opt_parametrized_lem_12_b}
        R_{P'_k}(f'_k(p_{T\cup U}(\bs \alpha))) = R_{P'_k}(f'_k(p_U(\bs \alpha)))
    \end{equation}
    for all integer $k$ satisfying $1 \leq k \leq m$, $k \neq i$, because $T \subseteq \param(f'_i)$ and $f'_i$ and $f'_k$ have disjoint sets of useful parameters.
    Based on Equations~\ref{eq:opt_parametrized_lem_14_b},~\ref{eq:opt_parametrized_lem_11_b} and~\ref{eq:opt_parametrized_lem_12_b}, we have
    \begin{equation}\label{eq:opt_parametrized_lem_13_b}
        \begin{aligned}
            \prod_{k=1}^{m'} R_{P'_k}(f'_k(p_{T \cup U}(\bs \alpha))) &= R_{P'_i}(f'_i(p_T(\bs \alpha))) \prod_{k=1, k \neq i}^{m'} R_{P'_k}(f'_k(p_U(\bs \alpha))) \\
            &= R_{P'_i}(f'_i(p_T(\bs \alpha))) R_{P_u}(f_u(p_U(\bs \alpha)))
        \end{aligned}
    \end{equation}
    for all $\bs \alpha$ because we proved that $R_{P'_i}(f'_i(\bs \alpha))$ commutes with $R_{P'_k}(f'_k(\bs \alpha))$ for all $\bs \alpha$ and all integer $k$ satisfying $1 \leq k \leq i$.
    Furthermore, we have
    \begin{equation}\label{eq:opt_parametrized_lem_15_b}
        \prod_{k=1}^{m} R_{P_k}(f_k(p_{T \cup U}(\bs \alpha))) = R_{P_u}(f_u(p_U(\bs \alpha))) R_{P_j}(f_j(p_T(\bs \alpha)))
    \end{equation}
    for all $\bs \alpha$, because $u<j$, $U \subseteq \param(f_u)$, $T \subseteq \param(f_j)$ and all $f$ functions have disjoint sets of useful parameters.
    According to Equation~\ref{eq:pauli_rotation_sequences_equality}, Equations~\ref{eq:opt_parametrized_lem_13_b} and~\ref{eq:opt_parametrized_lem_15_b} should be equal:
    \begin{equation}\label{eq:opt_parametrized_lem_16_b}
        R_{P'_i}(f'_i(p_T(\bs \alpha))) R_{P_u}(f_u(p_U(\bs \alpha))) = R_{P_u}(f_u(p_U(\bs \alpha))) R_{P_j}(f_j(p_T(\bs \alpha)))
    \end{equation}
    for all $\bs \alpha$.
    Using Equation~\ref{eq:opt_parametrized_lem_18_b}, Equation~\ref{eq:opt_parametrized_lem_16_b} can be rewritten as
    \begin{equation}\label{eq:opt_parametrized_lem_17_b}
        R_{P_j}(f_j(p_T(\bs \alpha))) R_{P_u}(f_u(p_U(\bs \alpha))) = R_{P_u}(f_u(p_U(\bs \alpha))) R_{P_j}(f_j(p_T(\bs \alpha)))
    \end{equation}
    for all $\bs \alpha$.
    Equation~\ref{eq:opt_parametrized_lem_17_b} indicates that $R_{P_j}(f_j(\bs \alpha))$ commutes with $R_{P_u}(f_u(\bs \alpha))$ for all $\bs \alpha$.
    This contradicts the fact that $R_{P_u}(f_u(\bs \alpha))$ anticommutes with $R_{P_j}(f_j(\bs \alpha))$ for some $\bs \alpha$.
    Therefore, by contradiction, there does not exist an integer $j$ and an $\bs \alpha$ such that $1 < j$ and $f_j(p_{\param(f'_i)}(\bs \alpha)) \neq 0$.
    Thus, based on Equation~\ref{eq:pauli_rotation_sequences_equality},
    \begin{equation}
        R_{P_1}(f_1(p_{\param(f'_i)}(\bs \alpha))) = R_{P'_i}(f'_i(p_{\param(f'_i)}\bs \alpha)))
    \end{equation}
    for all $\bs \alpha$, which implies $\param(f_1) \subseteq \param(f'_i)$.

    We proved that $\param(f'_i) \subseteq \param(f_1)$ and $\param(f_1) \subseteq \param(f'_i)$, which implies $\param(f_1) = \param(f'_i)$.
    Based on Equation~\ref{eq:pauli_rotation_sequences_equality} and because all $f$ and $f'$ functions have disjoint sets of useful parameters, it follows that
    \begin{equation}
        R_{P_1}(f_1(\bs \alpha)) = R_{P'_i}(f'_i(\bs \alpha))
    \end{equation}
    for all $\bs \alpha$.

    We can now conclude the proof of Lemma~\ref{lem:opt_parametrized_rotations} by induction.
    Let $k$ be an integer satisfying $0 < k < m$.
    Let's assume that there exists an injective function $\sigma$ satisfying $R_{P_i}(f_i(\bs \alpha)) = R_{P'_{\sigma(i)}}(f'_{\sigma(i)}(\bs \alpha))$ for all $i$ satisfying $1 \leq i \leq k$ and such that $R_{P'_{\sigma(i)}}(f'_{\sigma(i)}(\bs \alpha))$ commutes with $R_{P'_j}(f'_j(\bs \alpha))$ for all $\bs \alpha$ and all integer $j$ satisfying $\sigma(i) \leq j \leq i$ and $i \leq j \leq \sigma(i)$.
    Note that we already proved this in the case where $k=1$.
    Let $\tilde{C_o}$ and $\tilde{C}'$ be the parametrized Clifford circuits with non-repeated parameters obtained from $C_o$ and $C'$, respectively, by removing the parametrized rotation gates associated with the Pauli rotations $R_{P_i}(f_i(\bs \alpha))$ and $R_{P'_{\sigma(i)}}(f'_{\sigma(i)}(\bs \alpha))$ for all $i$ satisfying $1 \leq i \leq k$.
    The parametrized Clifford circuits $\tilde{C_o}$ and $\tilde{C}'$ are equivalent because $R_{P_i}(f_i(\bs \alpha)) = R_{P'_{\sigma(i)}}(f'_{\sigma(i)}(\bs \alpha))$ for all $i$ satisfying $1 \leq i \leq k$, and $P'_{\sigma(i)}$ commutes with $P'_j$ for all $j$ satisfying $\sigma(i) \leq j \leq i$ and $i \leq j \leq \sigma(i)$.
    Also, because $C_o$ is produced by Algorithm~\ref{alg:bb_merge}, there does not exist two integers $i, j$ satisfying $k < i < j \leq m$ and $P_i = P_j$, and such that $R_{P_j}(f_j(\bs \alpha))$ commutes with $R_{P_\ell}(f(_\ell(\bs \alpha))$ for all integer $\ell$ satisfying $i \leq \ell \leq j$.
    Therefore, the parametrized Clifford circuit with non-repeated parameters produced by Algorithm~\ref{alg:bb_merge} when $\tilde{C_o}$ is given as input is identical to $\tilde{C_o}$.
    Moreover, and $\tilde{C}'$ has an optimal number of parametrized rotations because $C'$ has an optimal number of parametrized rotations.
    Thus, the circuits $\tilde{C_o}$ and $\tilde{C}'$ are satisfying the conditions of Lemma~\ref{lem:opt_parametrized_rotations}.
    Then, as proven above, $R_{P_{k+1}}(f_{k+1}(\bs \alpha)) = R_{P'_{i}}(f'_{i}(\bs \alpha))$ for all $\bs \alpha$ and some integer $i$ where $R_{P'_{i}}(f'_{i}(\bs \alpha))$ belongs to the sequence of Pauli rotations associated with $\tilde{C}'$, and $R_{P'_i}(f'_i(\bs \alpha))$ commutes, up to a global phase, with $R_{P'_j}(f'_j(\bs \alpha))$ for all $j$ satisfying $i \leq j \leq k+1$ and $k+1 \leq j \leq k+1$.
    Hence, by induction, $m = m'$ and there exists a permutation $\sigma$ satisfying $R_{P_i}(f_i(\bs \alpha)) = R_{P'_{\sigma(i)}}(f'_{\sigma(i)}(\bs \alpha))$ for all $i$ and such that $R_{P'_{\sigma(i)}}(f'_{\sigma(i)}(\bs \alpha))$ commutes, up to a global phase, with $R_{P'_j}(f'_j(\bs \alpha))$ for all $\bs \alpha$ and all integer $j$ satisfying $\sigma(i) \leq j \leq i$ and $i \leq j \leq \sigma(i)$.
\end{proof}

A direct consequence of Lemma~\ref{lem:opt_parametrized_rotations} is that the relations between the parameters of two equivalent  parametrized Clifford circuits with non-repeated parameters are corresponding to additive relations.
In Reference~\cite{van2024optimal}, it was proven that parameter transformations restricted to analytic functions are actually equal to affine functions.
Lemma~\ref{lem:opt_parametrized_rotations} extends this result for the general case, thereby showing that affine functions are truly all we need to describe the parameter transformations.

Theorem~\ref{thm:parametrized_opt} is not exclusive to Algorithm~\ref{alg:bb_merge}; the result also holds for any algorithm which finds all the parametrized rotations that can be merged together.
Among these algorithms, Algorithm~\ref{alg:bb_merge} is the one having the best complexity.
As discussed in Section~\ref{sec:merging}, the algorithm of Reference~\cite{zhang2019optimizing} has a complexity of $\mathcal{O}(nM + nm^2)$ where $n$ is the number of qubits, $M$ is the number of gates and $m$ is the number of parametrized rotations.
This is worse than the $\mathcal{O}(nM + nhm)$ complexity of Algorithm~\ref{alg:bb_merge}, where $h$ satisfies $h < m$ and is the minimal number of internal Hadamard gates.
In Reference~\cite{kissinger2020reducing}, another approach is proposed with a complexity of $\mathcal{O}(N^3)$, where $N$ is the number of spiders in the ZX-diagram associated with the initial circuit.
This complexity is also worse than the one of Algorithm~\ref{alg:bb_merge} because $N$ satisfies $h < m \leq N$ and $n \leq N$.
The performances of these algorithms are compared in Section~\ref{sec:bench}.

\section{Optimality in the number of Hadamard gates and internal Hadamard gates}\label{sec:opt_parametrized_h}

In this section, we present similar results for the number of Hadamard gates and internal Hadamard gates.
The number of internal Hadamard gates in a parametrized quantum circuit corresponds to the number of Hadamard between the first and last parametrized rotation gate of the circuit.
In Reference~\cite{vandaele2024optimal}, an algorithm is presented for implementing the sequence of Pauli rotations and the following Clifford operator of Equation~\ref{eq:sequence_prod} with a minimal number of Hadamard gates and internal Hadamard gates over the $\{\mathrm{CNOT}, H, S, X, R_Z\}$ gate set.
In this section, we will base our results on this algorithm and we will also consider that the Clifford gates in parametrized Clifford circuits are from the $\{\mathrm{CNOT}, H, S, X\}$ gate set.
We prove the following theorem, which states the optimality of the algorithm presented in Reference~\cite{vandaele2024optimal} for reducing the number of Hadamard and internal Hadamard gates in a parametrized Clifford circuit with non-repeated parameters

\begin{theorem}\label{thm:parametrized_h_opt}
    Let $C_o$ be the parametrized Clifford circuit with non-repeated parameters produced by the algorithm presented in Reference~\cite{vandaele2024optimal} when a parametrized Clifford circuit with non-repeated parameters is given as input.
    Then, he number of Hadamard gates and the number of internal Hadamard gates in $C_o$ is optimal: there does not exist a parametrized Clifford circuit with non-repeated parameters equivalent to $C_o$ which contains a lower number of Hadamard gates or a lower number of internal Hadamard gates.
\end{theorem}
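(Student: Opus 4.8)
The plan is to reduce the optimality of both Hadamard counts to two ingredients: an achievability-and-lower-bound fact imported from Reference~\cite{vandaele2024optimal}, and an invariance fact stating that the ranks governing these counts are constant over the whole equivalence class of non-repeated circuits. Recall that the algorithm of \cite{vandaele2024optimal} produces a $C_o$ whose number of internal Hadamard gates equals $\rank(A)$ and whose number of Hadamard gates equals $\rank(M)$, where $A$ and $M$ are the commutativity matrix and the extended commutativity matrix associated with the input (equivalently with $C_o$, since the algorithm preserves the sequence of Pauli rotations and the final Clifford operator). Moreover, these are exactly the minimal numbers of (internal) Hadamard gates needed to implement a given sequence of Pauli rotations and final Clifford. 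Consequently, any equivalent parametrized Clifford circuit $C'$ with non-repeated parameters, whose own commutativity matrix and extended commutativity matrix are $A'$ and $M'$, already contains at least $\rank(A')$ internal Hadamard gates and at least $\rank(M')$ Hadamard gates, simply because it implements its own sequence of Pauli rotations. It therefore suffices to prove that $\rank(A')=\rank(A)$ and $\rank(M')=\rank(M)$ for every such $C'$.

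The heart of the argument is thus to show that the rank of the (extended) commutativity matrix is an invariant of the equivalence class, which I would establish in two steps mirroring the structure of the proof of Lemma~\ref{lem:opt_parametrized_rotations}. First, merging two Pauli rotations $R_{P_i}$ and $R_{P_j}$ with $P_i=P_j$ and no anticommuting rotation strictly between them preserves $\rank(A)$: since $P_i=P_j$ the two rows of the underlying symmetric anticommutation data coincide, and the absence of anticommutation in between forces $A_{i,k}=A_{j,k}=0$ for $i<k<j$ and $A_{i,j}=0$; a short computation then shows that in the strictly upper triangular matrix $A$ rows $i,j$ are identical and columns $i,j$ are identical, so deleting the merged index leaves the rank unchanged. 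Second, a permutation of the Pauli rotations that swaps only rotations which commute acts on $A$ by a symmetric conjugation $A\mapsto \Pi A\Pi$ for the corresponding permutation matrix $\Pi$ (the recomputed upper triangular matrix coincides with $\Pi A\Pi$ precisely because the swapped neighbours commute), and this preserves the rank.

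Combining these two steps with Lemma~\ref{lem:opt_parametrized_rotations} closes the argument. Applying Algorithm~\ref{alg:bb_merge} to both $C'$ and $C_o$ produces circuits with an optimal number of parametrized rotations and, by the merging step, the same rank of commutativity matrix as $C'$ and $C_o$ respectively; by Lemma~\ref{lem:opt_parametrized_rotations} these two rotation-optimal circuits share the same sequence of Pauli rotations up to a commuting permutation, so by the permutation step their commutativity matrices have equal rank. Chaining these equalities gives $\rank(A')=\rank(A)$, and the analogous treatment for the boundary rotations gives $\rank(M')=\rank(M)$. Together with the lower bounds above, this yields that $C_o$ simultaneously minimizes the number of internal Hadamard gates and the number of Hadamard gates.

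The main obstacle I anticipate lies in the extended matrix $M$ rather than in $A$. One must verify that the two boundary blocks of $2n$ inserted rotations genuinely agree for $C'$ and $C_o$: the starting block is always $Z_1,\dots,Z_n$, but the ending block is determined by the final Clifford operator, so one needs that merging and commuting permutations leave this Clifford operator unchanged. This holds here because, with non-repeated parameters, merging two parametrized rotations can never create a Clifford $\pi/2$ rotation that would be absorbed into the Clifford part. Once the boundary blocks are pinned down, checking that the duplicate-row/column and symmetric-permutation arguments extend verbatim to $M$ (the boundary elements interact with the merged positions $i,j$ only through $P_i=P_j$, hence remain identical there) is the step demanding the most care.
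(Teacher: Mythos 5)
Your proposal is correct and, for the commutativity matrix $A$, follows the paper's route essentially verbatim: the paper's Lemma~\ref{lem:bb_merge_rank} is exactly your two rank-preservation steps (deleting a duplicate row and column when two rotations with equal Pauli products and no anticommuting rotation in between are merged, and a symmetric row-and-column swap for exchanging adjacent commuting rotations), and the paper's Lemma~\ref{lem:h_opt_rank_eq} is your chaining of these through Algorithm~\ref{alg:bb_merge} and Lemma~\ref{lem:opt_parametrized_rotations}, finished off by the same achievability-plus-lower-bound import from Reference~\cite{vandaele2024optimal}. Where you genuinely diverge is the extended matrix $M$, and there the paper has a trick that dissolves the difficulty you flag as the delicate step: rather than analyzing the boundary blocks of $M$ directly, it constructs $\tilde{C}$ and $\tilde{C}'$ from $C$ and $C'$ by inserting parametrized $R_Z$ gates with \emph{fresh} parameters at the beginning and end of each circuit on every qubit; $\tilde{C}$ and $\tilde{C}'$ are then themselves equivalent parametrized Clifford circuits with non-repeated parameters whose ordinary commutativity matrices are precisely $M$ and $M'$, so the already-established invariance $\rank(A)=\rank(A')$ applies as a black box and no statement about final Clifford operators or boundary Pauli products is needed. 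Your direct route can be completed, but note that your justification of the boundary agreement aims at the wrong point: the question is not whether merging creates Clifford $\pi/2$ rotations that get absorbed into the Clifford part (with non-repeated parameters $f_i+f_j$ is automatically non-constant, so a merged rotation stays a parametrized rotation in the sequence), but whether the final Clifford operators of the two equivalent circuits agree at all --- which they do, up to global phase, simply by evaluating the parametrized unitaries at $\bs\alpha=\bs 0$, where every rotation becomes the identity. With that observation your verbatim extension of the duplicate-row/column and permutation arguments to $M$ goes through; the paper's insertion trick just buys the same conclusion without any boundary bookkeeping.
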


We will rely on the following lemma to prove Theorem~\ref{thm:parametrized_h_opt}, which states that applying Algorithm~\ref{alg:bb_merge} on a parametrized Clifford circuit do not change the rank of its associated commutativity matrix.

\begin{lemma}\label{lem:bb_merge_rank}
    Let $C$ be a parametrized Clifford circuits with non-repeated parameters and let $C_o$ be the parametrized Clifford circuit with non-repeated parameters produced by Algorithm~\ref{alg:bb_merge} when $C$ is given as input.
    Let $A$ and $A_o$ be the commutativity matrices associated with $C$ and $C_o$ respectively. 
    Then $\rank(A) = \rank(A_o)$.
\end{lemma}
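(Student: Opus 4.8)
The plan is to track the commutativity matrix through the individual merges performed by \texttt{BBMerge}. Since $C_o$ is obtained from $C$ by a sequence of merges, each of which combines two parametrized rotations into one and deletes a single rotation from the sequence (the one whose angle is set to $0$), it suffices to prove that one such merge leaves the rank of the commutativity matrix unchanged, and then to induct on the number of merges. So I would fix a single merge, say of the rotations at positions $a<b$ with $P_a=P_b=P$; it sets the angle at position $a$ to $0$ and places the combined angle at position $b$, so that (in the generic non-Clifford case) the post-merge sequence is exactly the pre-merge one with position $a$ deleted, and $A_o$ is then just $A$ with row and column $a$ removed.

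The core of the argument is to show that row $a$ and row $b$ of $A$ coincide, and likewise column $a$ and column $b$. \texttt{BBMerge} performs the merge only when its commutativity check succeeds, and by Theorem~\ref{thm:rank_vector} this guarantees that $P$ commutes with $P_k$ for every $k$ satisfying $a<k<b$. Using this, together with $P_a=P_b$ and the strict upper-triangularity of $A$, I would split into three index ranges: for $\ell\le a$ both row entries vanish; for $a<\ell\le b$ they again vanish (using that $P$ commutes with the intervening products and that $P_a=P_b$ gives $A_{a,b}=0$); and for $\ell>b$ they agree since $A_{a,\ell}=[P\text{ anticommutes }P_\ell]=A_{b,\ell}$. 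The symmetric computation yields equality of columns $a$ and $b$. Once rows $a,b$ and columns $a,b$ are identified, the conclusion is immediate: deleting the duplicated row $a$ does not change the span of the rows (hence preserves the rank), and deleting the then-duplicated column $a$ preserves the column rank, so $\rank(A_o)=\rank(A)$.

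The main obstacle is the fact, stressed after Theorem~\ref{thm:parametrized_opt}, that the commutativity checks in \texttt{BBMerge} are carried out \emph{up to a global phase}, which permits merges even across intervening products that genuinely anticommute with $P$, precisely when the rotations involved are Clifford-valued (angles in $\{0,\pi\}$ modulo $2\pi$). In that regime the clean ``row $a$ equals row $b$'' identity need not hold, since the combined angle $f_a+f_b$ is Clifford-valued only when both $f_a$ and $f_b$ are, so the commutation data attached to position $b$ can change under the merge, and moving a Clifford-valued rotation past another conjugates the later Pauli products by a Clifford operator. I expect the real work to lie here: one must reconcile the up-to-global-phase merge condition with the (plain) commutativity matrix and verify that the induced transformation — a controlled modification of the row and column at position $b$, composed with the deletion at position $a$ — is still rank-preserving, exploiting that Clifford conjugation is a commutation-preserving automorphism of the Pauli group. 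Establishing this rank-invariance in the up-to-global-phase case is the crux, whereas the non-Clifford case reduces to the duplicate-row/column argument above.
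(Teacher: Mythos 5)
Your core argument coincides with the paper's. The paper proves exactly the two facts you rely on, packaged as elementary sequence operations: swapping two adjacent rotations whose Pauli products commute amounts to simultaneously exchanging rows $i, i{+}1$ and columns $i, i{+}1$ of $A$ (legitimate precisely because $A_{i,i+1}=0$), and merging two adjacent rotations with $P_i = P_{i+1}$ deletes a duplicated row and column; both operations preserve the rank. It then observes that every merge performed by Algorithm~\ref{alg:bb_merge} factors into finitely many such moves --- your use of Theorem~\ref{thm:rank_vector} to upgrade the algorithm's check into commutation with \emph{all} intervening products is exactly what licenses this factorization --- and concludes by iterating over the merges, as you do. Your direct ``row $a$ equals row $b$, column $a$ equals column $b$'' computation is the same argument with the adjacent swaps inlined, so on this regime the two proofs are essentially identical.

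Where you diverge is your third paragraph, and two comments are in order. First, your proposed mechanism there is off: moving a $\pi$-valued rotation $R_P(\pi) = -iP$ past a later rotation conjugates the latter by a \emph{Pauli} operator, which at most flips signs; since the commutativity matrix is built from sign-stripped Pauli products, the product sequence after an up-to-phase merge is literally the old sequence with one position deleted, and no Clifford-conjugation bookkeeping of the kind you describe ever enters --- the only residual question is whether deleting the merged-away row and column preserves the rank of the strictly upper triangular matrix when a plainly anticommuting, $\pi$-valued intermediate breaks the duplicate-row identity. Second, the paper's proof does not do the extra work you anticipate: its swap move is stated and proved only for adjacent products with $A_{i,i+1}=0$, i.e.\ it reasons throughout at the level of plain Pauli-product commutativity and leaves the up-to-global-phase regime of the algorithm's checks untreated. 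So your instinct that something remains to be reconciled there is a fair reading of the algorithm, but it is not a component of the paper's own proof, and your sketched route to it (commutation-preserving Clifford automorphisms) would in any case have to be replaced by a direct rank analysis of the deletion.
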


\begin{proof}
    Let $R_{P_1}(f_1(\bs \alpha)), \ldots, R_{P_m}(f_m(\bs \alpha))$ be a sequence of Pauli rotations associated with $C$.
    Let $i$ be an integer satisfying $1 \leq i < m$ and such that $P_i$ commutes with $P_{i+1}$.
    Let $\tilde{A}$ be the commutativity matrix associated with the sequence of Pauli rotations where the positions of the $i$th and $(i+1)$th Pauli rotations have been swapped:
    $$R_{P_1}(f_1(\bs \alpha)), \ldots, R_{P_{i-1}}(f_{i-1}(\bs \alpha)), R_{P_{i+1}}(f_{i+1}(\bs \alpha)), R_{P_{i}}(f_{i}(\bs \alpha)), R_{P_{i+2}}(f_{i+2}(\bs \alpha)), \ldots, R_{P_{m}}(f_{m}(\bs \alpha)).$$
    Because $P_i$ commutes with $P_{i+1}$, by definition we have $A_{i, i+1} = \tilde{A}_{i, i+1} = 0$.
    It follows that $A_{j, i} = \tilde{A}_{j, i+1}$, $A_{j, i+1} = \tilde{A}_{j, i}$ and $A_{i, j} = \tilde{A}_{i+1, j}$, $A_{i+1, j} = \tilde{A}_{i, j}$ for all $j$.
    I.e.\ the matrix $\tilde{A}$ can be obtained from the matrix $A$ by swapping its $i$th and $(i+1)$th columns and its $i$th and $(i+1)$th rows.
    Performing these operations on the matrix $A$ do not change its rank, therefore $\rank(A) = \rank(\tilde{A})$.

    Let $i$ be an integer satisfying $1 \leq i < m$ and such that $P_i = P_{i+1}$.
    Let $\tilde{A}$ be the commutativity matrix associated with the sequence of Pauli rotations where the $i$th and $(i+1)$th Pauli rotations have been merged into a single Pauli rotation:
    $$R_{P_1}(f_1(\bs \alpha)), \ldots, R_{P_{i}}(f_i(\bs \alpha) + f_{i+1}(\bs \alpha)), R_{P_{i+2}}(f_{i+2}(\bs \alpha)), \ldots, R_{P_{m}}(f_m(\bs \alpha)).$$
    Because $P_{i} = P_{i+1}$, by definition we have $A_{j, i} = A_{j, i+1} = \tilde{A}_{j, i}$ and $A_{i, j} = A_{i+1, j} = \tilde{A}_{i, j}$ for all $j$.
    I.e.\ the $i$th column of $A$ is equal to the $(i+1)$th column of $A$, the $i$th row of $A$ is equal to the $(i+1)$th row of $A$, and the matrix $\tilde{A}$ can be obtained from the matrix $A$ by removing its $(i+1)$th column and its $(i+1)$th row.
    Performing these operations on the matrix $A$ do not change its rank, therefore $\rank(A) = \rank(\tilde{A})$.

    As stated by Lemma~\ref{lem:opt_parametrized_rotations}, the sequence of Pauli rotations associated with $C_o$ can be obtained from the sequence of Pauli rotations associated with $C$ by repeating any of the two following operations a finite number of times:
    \begin{itemize}
        \item Swap the positions of two adjacent commuting Pauli rotations.
        \item Merge two adjacent Pauli rotations $R_{P_i}(f_i(\bs \alpha))$ and $R_{P_{i+1}}(f_{i+1}(\bs \alpha))$ satisfying $P_i = P_{i+1}$ into a single Pauli rotation $R_{P_i}(f_i(\bs \alpha) + f_{i+1}(\bs \alpha))$.
    \end{itemize}
    We proved that these two operations do not change the rank of the associated commutativity matrix.
    Thus, $\rank(A) = \rank(A_o)$.
\end{proof}

Based on Lemma~\ref{lem:bb_merge_rank}, we can prove the following lemma, which states that the commutativity matrices and the extended commutativity matrices associated with the parametrized Clifford circuits with non-repeated parameters produced by Algorithm~\ref{alg:bb_merge} all have the same rank.

\begin{lemma}\label{lem:h_opt_rank_eq}
    Let $C$ and $C'$ be equivalent  parametrized Clifford circuits with non-repeated parameters.
    Let $A$ and $A'$ be the commutativity matrices associated with $C$ and $C'$ respectively.
    And let $M$ and $M'$ be the extended commutativity matrices associated with $C$ and $C'$ respectively.
    Then $\rank(A) = \rank(A')$ and $\rank(M) = \rank(M')$.
\end{lemma}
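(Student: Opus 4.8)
The plan is to reduce the claim to a comparison of the outputs of Algorithm~\ref{alg:bb_merge} and then exploit the structural description of Lemma~\ref{lem:opt_parametrized_rotations}. First I would set $C_o = \texttt{BBMerge}(C)$ and $C'_o = \texttt{BBMerge}(C')$ and let $A_o$, $A'_o$ be their commutativity matrices. Lemma~\ref{lem:bb_merge_rank} gives $\rank(A) = \rank(A_o)$ and $\rank(A') = \rank(A'_o)$, so it is enough to prove $\rank(A_o) = \rank(A'_o)$. Since $C_o \equiv C \equiv C' \equiv C'_o$, the circuit $C'_o$ is equivalent to $C_o$, and by Theorem~\ref{thm:parametrized_opt} it has an optimal number of parametrized rotations. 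Hence Lemma~\ref{lem:opt_parametrized_rotations} applies with $C_o$ as the \texttt{BBMerge} output and $C'_o$ as the optimal equivalent circuit, yielding $m = m'$ together with a permutation $\sigma$ with $R_{P_i}(f_i(\bs \alpha)) = R_{P'_{\sigma(i)}}(f'_{\sigma(i)}(\bs \alpha))$ for all $i$, and such that $R_{P'_{\sigma(i)}}(f'_{\sigma(i)}(\bs \alpha))$ commutes, up to a global phase and for all $\bs \alpha$, with every $R_{P'_j}(f'_j(\bs \alpha))$ whose index $j$ lies between $i$ and $\sigma(i)$.

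The main step is to show that this commuting reordering of the sequence of $C'_o$ into the sequence of $C_o$ preserves the rank of the commutativity matrix. Because the functions $f'_j$ are non-constant, a commutation up to a global phase that holds for \emph{all} $\bs \alpha$ cannot come from the special case of two $\pi$-rotations, and therefore forces the underlying Pauli products to commute; the reordering is thus realized by swaps of adjacent Pauli rotations whose Pauli products commute, which is exactly the elementary operation analysed in the first part of the proof of Lemma~\ref{lem:bb_merge_rank}. Since $m = m'$, the permutation $\sigma$ is a genuine bijection and no merge occurs. I would then decompose $\sigma$ into a finite sequence of adjacent commuting transpositions: proceeding greedily, the rotation destined for the first position commutes with every rotation lying between its current and target position, so it can be slid into place by adjacent commuting swaps; the remaining rotations satisfy the same hypothesis, and an induction on the length of the sequence finishes the decomposition. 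Each adjacent swap permutes a pair of rows of $A_o$ together with the corresponding pair of columns and hence leaves its rank unchanged, so $\rank(A_o) = \rank(A'_o)$ and therefore $\rank(A) = \rank(A')$.

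For the extended commutativity matrices I would reduce to the case just proved. Let $C_{\mathrm{ext}}$ and $C'_{\mathrm{ext}}$ be obtained from $C$ and $C'$ by inserting, at the beginning and at the end of each qubit, an $R_Z$ gate whose angle is a fresh non-repeated parameter, using the same fresh parameters in both circuits. As the commutativity matrix depends only on the Pauli products and not on the rotation angles, the commutativity matrices of $C_{\mathrm{ext}}$ and $C'_{\mathrm{ext}}$ are precisely $M$ and $M'$. These extended circuits are parametrized Clifford circuits with non-repeated parameters, and they are equivalent because identical parametrized rotations are prepended and appended to the equivalent circuits $C$ and $C'$. Applying the statement already established for commutativity matrices to $C_{\mathrm{ext}}$ and $C'_{\mathrm{ext}}$ then gives $\rank(M) = \rank(M')$.

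The step I expect to be the main obstacle is the decomposition of $\sigma$ into adjacent commuting transpositions: one must verify that after a rotation has been slid into its target position the commutation hypotheses of Lemma~\ref{lem:opt_parametrized_rotations} still hold for the residual permutation, so that the induction is legitimate. Everything else rests on the rank invariance already isolated in Lemma~\ref{lem:bb_merge_rank} and on the reduction of the extended case to the plain one.
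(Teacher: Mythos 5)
Your proof has the same architecture as the paper's: reduce to the \texttt{BBMerge} outputs $C_o$, $C'_o$ via Lemma~\ref{lem:bb_merge_rank}, use Lemma~\ref{lem:opt_parametrized_rotations} to relate their Pauli rotation sequences by a commuting reordering, conclude $\rank(A_o)=\rank(A'_o)$ by rank invariance under swaps of adjacent commuting rotations, and handle the extended matrices by inserting fresh non-repeated parametrized $R_Z$ gates at the boundary of each qubit --- this last reduction is essentially verbatim the paper's. Your greedy decomposition of $\sigma$ into adjacent transpositions is also fine (the interval of indices guaranteed by Lemma~\ref{lem:opt_parametrized_rotations} only shrinks as elements are slid into place, so the induction goes through), and the paper assumes this decomposition without comment.

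The genuine problem is the step where you argue that commutation up to a global phase holding for \emph{all} $\bs\alpha$ ``cannot come from the special case of two $\pi$-rotations'' because the $f'_j$ are non-constant, and hence forces the Pauli products themselves to commute. That inference is false: the case analysis is per-$\bs\alpha$, not uniform in $\bs\alpha$. Take $f'_i$ and $f'_j$ valued in $\{0,\pi\}$ (e.g.\ $f(\bs\alpha)=\pi$ iff a designated coordinate is nonzero) --- such functions are non-constant, satisfy $f(\bs 0)=0$, and are explicitly permitted since the paper imposes no continuity restriction; indeed this is exactly the situation in the example following Theorem~\ref{thm:parametrized_opt}. For every $\bs\alpha$, either one of the two rotations is the identity or both angles are $\pi$, so the pair commutes up to a ($\bs\alpha$-dependent) global phase for all $\bs\alpha$ even when the Pauli products anticommute. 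And the rank-invariance step genuinely requires Pauli-level commutation: swapping adjacent \emph{anticommuting} rotations is not a simultaneous row/column permutation of the strictly upper triangular commutativity matrix and can change its rank. Concretely, the sequence $XI,\, ZI,\, ZZ$ has commutativity matrix of rank $1$, while $ZI,\, XI,\, ZZ$ has rank $2$. So as written, your chain of reductions breaks unless the $\{0,\pi\}$-valued anticommuting configuration is ruled out by an additional argument, which you do not supply. In fairness, the paper's own proof elides the same point: it reads Lemma~\ref{lem:opt_parametrized_rotations} as providing swaps of adjacent \emph{commuting} Pauli rotations, silently discarding the ``up to a global phase'' caveat, whereas Lemma~\ref{lem:bb_merge_rank}'s swap argument is proved only under genuine commutation of $P_i$ and $P_{i+1}$. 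The difference is that the paper asserts the step, while you offer an explicit justification for it that is invalid --- and that step is precisely where the remaining content of the lemma lies.
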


\begin{proof}
    Let $C_o$ and $C'_o$ be the parametrized Clifford circuits with non-repeated parameters produced by Algorithm~\ref{alg:bb_merge} when $C$ and $C'$ are given as input.
    And let $A_o$ and $A'_o$ be the commutativity matrices associated with $C_o$ and $C'_o$ respectively.
    Based on Lemma~\ref{lem:bb_merge_rank} we have $\rank(A_o) = \rank(A)$ and $\rank(A'_o) = \rank(A')$.
    Moreover, Lemma~\ref{lem:opt_parametrized_rotations} states that $C_o$ and $C'_o$ have the same number of parametrized rotations, and that the sequence of Pauli rotations associated with $C_o$ can be obtained from the sequence of Pauli rotations associated with $C'_o$ by repeatedly swapping the position of two adjacent commuting Pauli rotations.
    As shown in the proof Lemma~\ref{lem:bb_merge_rank}, swapping the position of two adjacent commuting Pauli rotations do not change the rank of the associated commutativity matrix.
    It follows that $\rank(A_o) = \rank(A'_o)$, which implies that $\rank(A) = \rank(A')$.

    Let $\tilde{C}$ and $\tilde{C}'$ be circuits obtained from $C$ and $C'$ by inserting a parametrized $R_Z$ gate at the beginning and the end of the circuit on each qubit, such that $\tilde{C}$ and $\tilde{C}'$ are equivalent  parametrized Clifford circuits with non-repeated parameters.
    As proven above, the commutativity matrices associated with $\tilde{C}$ and $\tilde{C}'$ have the same rank because $\tilde{C}$ and $\tilde{C}'$ are equivalent.
    Moreover, the commutativity matrices of $\tilde{C}$ and $\tilde{C}'$ are the extended commutativity matrices of $C$ and $C'$ respectively, denoted $M$ and $M'$.
    Thus, $\rank(M) = \rank(M')$.
\end{proof}

The proof of Theorem~\ref{thm:parametrized_h_opt} can now be formulated by relying on Lemma~\ref{lem:h_opt_rank_eq}.

\begin{proof}[Proof of Theorem~\ref{thm:parametrized_h_opt}]
    Let $A_o$ and $M_o$ be the commutativity matrix and the extended commutativity matrix associated with $C_o$.
    The algorithm of Reference~\cite{vandaele2024optimal} produces a circuit with a number of internal Hadamard gates equal to $\rank(A_o)$ and a number of Hadamard gates equal to $\rank(M_o)$.
    Lemma~\ref{lem:h_opt_rank_eq} states that any parametrized Clifford circuits with non-repeated parameters equivalent to $C_o$ with an associated commutativity matrix $A$ and an associated extended commutativity matrix $M$ satisfies $\rank(A) = \rank(A_o)$ and $\rank(M) = \rank(M_o)$.
    Moreover, in a quantum circuit composed of gates from the set $\{\mathrm{CNOT}, S, H, X, R_Z\}$, the number of internal Hadamard gates and the number of Hadamard gates cannot be lower than the rank of its associated commutativity matrix and the rank of its associated extended commutativity matrix respectively~\cite{vandaele2024optimal}.
    Therefore, there does not exist a parametrized Clifford circuit with non-repeated parameters equivalent to $C_o$ which contains a number of internal Hadamard lower than $\rank(A_o)$ or a number of Hadamard gates lower than $\rank(M_o)$.
\end{proof}

\section{Benchmarks}\label{sec:bench}

\begin{table}
\resizebox{1.0\columnwidth}{!}{
\begin{tabular}{lrrrrrrrrrrrrr} 
        \toprule
         & && \multicolumn{2}{c}{\texttt{PyZX}~\cite{kissinger2020reducing}} && \multicolumn{2}{c}{\texttt{BBMerge}} && \multicolumn{2}{c}{\texttt{FastTMerge}} && \multicolumn{2}{c}{\texttt{TMerge}~\cite{zhang2019optimizing}} \\
        \cmidrule(lr){4-5} \cmidrule(lr){7-8} \cmidrule(lr){10-11} \cmidrule(lr){13-14}
        Circuit & $T$-count && $T$-count & t (s) && $T$-count & t (s) && $T$-count & t (s) && $T$-count & t (s) \\
        \midrule
        Adder$_8$ & 399 && 173 & 0 && \bf 179 & 0 && 173 & 0 && 173 & 0 \\
        Barenco Tof$_3$ & 28 && 16 & 0 && 16 & 0 && 16 & 0 && 16 & 0 \\
        Barenco Tof$_4$ & 56 && 28 & 0 && 28 & 0 && 28 & 0 && 28 & 0 \\
        Barenco Tof$_5$ & 84 && 40 & 0 && 40 & 0 && 40 & 0 && 40 & 0 \\
        Barenco Tof$_{10}$ & 224 && 100 & 0 && 100 & 0 && 100 & 0 && 100 & 0 \\
        CSLA MUX$_3$ & 70 && 62 & 0 && 62 & 0 && 62 & 0 && 62 & 0 \\
        CSUM MUX$_9$ & 196 && 84 & 0 && 84 & 0 && 84 & 0 && 84 & 0 \\
        DEFAULT & 62720 && - & - && 39744 & 6 && 39744 & 6 && 39744 & 8 \\
        GF$(2^4)$ Mult & 112 && 68 & 0 && 68 & 0 && 68 & 0 && 68 & 0 \\
        GF$(2^5)$ Mult & 175 && 115 & 0 && 115 & 0 && 115 & 0 && 115 & 0 \\
        GF$(2^6)$ Mult & 252 && 150 & 0 && 150 & 0 && 150 & 0 && 150 & 0 \\
        GF$(2^7)$ Mult & 343 && 217 & 0 && 217 & 0 && 217 & 0 && 217 & 0 \\
        GF$(2^8)$ Mult & 448 && 264 & 0 && 264 & 0 && 264 & 0 && 264 & 0 \\
        GF$(2^9)$ Mult & 567 && 351 & 0 && 351 & 0 && 351 & 0 && 351 & 0 \\
        GF$(2^{10})$ Mult & 700 &&  410 & 1 && 410 & 0 && 410 & 0 && 410 & 0 \\
        GF$(2^{16})$ Mult & 1792 && 1040 & 4 && 1040 & 0 && 1040 & 0 && 1040 & 0 \\
        GF$(2^{32})$ Mult & 7168 && 4128 & 47 && 4128 & 0 && 4128 & 0 && 4128 & 6 \\
        GF$(2^{64})$ Mult & 28672 && 16448 & 423 && 16448 & 0 && 16448 & 0 && 16448 & 99 \\
        GF$(2^{128})$ Mult & 114688 && 65664 & 8917 && 65664 & 2 && 65664 & 2 && 65664 & 1930 \\
        GF$(2^{256})$ Mult & 458752 && - & - && 262400 & 18 && 262400 & 18 && 262400 & 42304 \\
        GF$(2^{512})$ Mult & 1835008 && - & - && 1048576 & 137 && 1048576 & 138 && - & - \\
        Grover$_5$ & 336 && 166 & 0 && 166 & 0 && 166 & 0 && 166 & 0 \\
        Ham$_{15}$ (high) & 2457 && 1019 & 9 && \bf 1021 & 0 && 1019 & 0 && 1019 & 0 \\
        Ham$_{15}$ (low) & 161 && 97 & 0 && 97 & 0 && 97 & 0 && 97 & 0 \\
        Ham$_{15}$ (med) & 574 && 212 & 2 && \bf 242 & 0 && 212 & 0 && 212 & 0 \\
        HWB$_6$ & 105 && 75 & 0 && 75 & 0 && 75 & 0 && 75 & 0 \\
        HWB$_8$ & 5887 && 3517 & 43 && \bf 3583 & 0 && 3517 & 0 && 3517 & 0 \\
        HWB$_{10}$ & 29939 && 15891 & 910  && \bf 16371 & 0 && 15891 & 0 && 15891 & 1 \\
        HWB$_{11}$ & 84196 && 44500 & 7473 && \bf 45610 & 1 && 44500 & 1 && 44500 & 1 \\
        HWB$_{12}$ & 171465 && 85611 & 52910 && \bf 87035 & 2 && 85611 & 2 && 85611 & 5 \\
        Mod Adder$_{1024}$ & 1995 && 1011 & 3 && 1011 & 0 && 1011 & 0 && 1011 & 0 \\
        Mod Mult$_{55}$ & 49 && 35 & 0 && 35 & 0 && 35 & 0 && 35 & 0 \\
        Mod Red$_{21}$ & 119 && 73 & 0 && 73 & 0 && 73 & 0 && 73 & 0 \\
        Mod5$_4$ & 28 && 8 & 0 && 8 & 0 && 8 & 0 && 8 & 0 \\
        QCLA Adder$_{10}$ & 238 && 162 & 0 && 162 & 0 && 162 & 0 && 162 & 0 \\
        QCLA Com$_7$ & 203 && 95 & 0 && 95 & 0 && 95 & 0 && 95 & 0 \\
        QCLA Mod$_7$ & 413 && 237 & 0 && 237 & 0 && 237 & 0 && 237 & 0 \\
        QFT$_4$ & 69 && 67 & 0 && 67 & 0 && 67 & 0 && 67 & 0 \\
        RC Adder$_6$ & 77 && 47 & 0 && 47 & 0 && 47 & 0 && 47 & 0 \\
        Tof$_3$ & 21 && 15 & 0 && 15 & 0 && 15 & 0 && 15 & 0 \\
        Tof$_4$ & 36 && 23 & 0 && 23 & 0 && 23 & 0 && 23 & 0 \\
        Tof$_5$ & 49 && 31 & 0 && 31 & 0 && 31 & 0 && 31 & 0 \\
        Tof$_{10}$ & 119 && 71 & 0 && 71 & 0 && 71 & 0 && 71 & 0 \\
        VBE Adder$_3$ & 70 && 24 & 0 && 24 & 0 && 24 & 0 && 24 & 0 \\
        \bottomrule
\end{tabular}
}
\caption{Comparison of different procedures for optimizing the number of $T$ gates in Clifford$+T$ circuits.
    The first $T$-count column indicates the number of $T$ gates in the initial circuits.
    The $T$-count after optimization and the execution time in seconds is reported for each procedure.
    A blank entry indicates that the execution couldn’t be carried out in less than a day.
    A $T$-count entry is in bold if the \texttt{BBMerge} procedure didn't achieve the same $T$-count as the other methods.}\label{tab:bench_merging}
\end{table}

In this section we evaluate the performances of the algorithms we presented, and we compare them to state-of-the-art alternatives.
The benchmarks are performed over a set of Clifford$+T$ circuits obtained from References~\cite{amyGithub, reversibleBenchmarks}.
We also include a circuit implementing the block cipher DEFAULT, as given in Reference~\cite{default}.
We compare the performances of the algorithm presented in Reference~\cite{kissinger2020reducing}, the \texttt{BBMerge} procedure (Algorithm~\ref{alg:bb_merge}), the \texttt{FastTMerge} procedure (Algorithm~\ref{alg:fast_t_merge}) and the \texttt{TMerge} procedure of Reference~\cite{zhang2019optimizing}.
The \texttt{BBMerge}, \texttt{FastTMerge} and \texttt{TMerge} procedures were implemented with the Rust programming language, while the algorithm of Reference~\cite{kissinger2020reducing} was implemented in Python in the PyZX library~\cite{kissinger2020Pyzx}.
Our implementations of Algorithm~\ref{alg:bb_merge} and Algorithm~\ref{alg:fast_t_merge} used for the benchmarks are open source~\cite{github}.

The results of our benchmarks are presented in Table~\ref{tab:bench_merging}.
As expected, the \texttt{FastTMerge} procedure and the algorithms presented in Reference~\cite{zhang2019optimizing} and Reference~\cite{kissinger2020reducing} all obtain the same reduction in the number of $T$ gates.
The \texttt{BBMerge} procedure has the lowest execution time but it fails to achieve the same number of $T$ gates as the other algorithms on a few circuits.
This indicates that, for these circuits, if the non-Clifford rotation gates are treated as black boxes, as done by the \texttt{BBMerge} procedure, it is not possible to find all the non-Clifford rotation gates that can be merged into a single rotation gate.
To achieve better optimization, as done by the other algorithms, it is necessary to know the angles of the non-Clifford rotation gates.

We can notice that the \texttt{BBMerge} and \texttt{FastTMerge} procedures have very similar execution times, despite not having the same complexity.
Also, on large circuits, these procedures have much lower execution times than the algorithms presented in References~\cite{zhang2019optimizing, kissinger2020reducing}.
For instance, the ``GF($2^{512}$) Mult" circuit can be optimized in less than 3 minutes by the \texttt{BBMerge} and \texttt{FastTMerge} procedures, whereas the other algorithms could not optimize this circuit whithin a day.
The \texttt{BBMerge} and \texttt{FastTMerge} procedures are particularly efficient on the ``GF($2^{n}$) Mult" circuits because these circuits can be rewritten without any internal Hadamard gates.
Thus, by computing the rank vector, the procedures will detect that all the Pauli rotations associated with the circuit are commuting, and so no commutativity check will have to be performed.
This best-case scenario for the \texttt{BBMerge} and \texttt{FastTMerge} procedures is, on the contrary, the worst-case scenario for the \texttt{TMerge} procedure of Reference~\cite{zhang2019optimizing}.
Indeed, in such cases, the \texttt{TMerge} procedure will have to perform $\mathcal{O}(m^2)$ commutativity checks, where $m$ is the number of $T$ gates in the initial circuit.

The worst-case complexity of the \texttt{BBMerge} procedure is $\mathcal{O}(nM + nhm)$, whereas the worst-case complexity of the \texttt{FastTMerge} procedure is $\mathcal{O}(nM + nm^2)$, where $n$ is the number of qubits, $M$ is the number of gates in the initial circuit, $m$ is the number of $T$ gates in the initial circuit, and $h$ satisfies $h < m$ and is the minimal number of internal Hadamard gates required to implement the sequence of Pauli rotations associated with the initial circuit.
Thus, an open problem is to bridge the complexity gap between the \texttt{BBMerge} procedure and the \texttt{FastTMerge} procedure.
Does there exist an algorithm that achieves the same $T$-count reduction as the \texttt{FastTMerge} procedure but with the same time complexity as the \texttt{BBMerge} procedure?
If such an algorithm exists, then, without increasing the time complexity, it could be coupled with the algorithm presented in Reference~\cite{vandaele2024optimal}, which finds the optimal number of Hadamard gates and internal Hadamard gates in the circuit as long as the associated sequence of Pauli rotations is not altered.
This would result in a particularly efficient procedure for optimizing both the number of $T$ gates and the number of Hadamard gates, which can be used as a pre-processing step to improve the efficiency of more advanced $T$-count optimizers~\cite{heyfron2018efficient, de2020fast, ruiz2024quantum}.

\section{Conclusion}

We presented an algorithm to optimize the number of non-Clifford rotation gates in a quantum circuit.
Our algorithm has a better complexity than other algorithms having the same optimization strategy and benchmarks show that its execution time is much lower.
When executed on a parametrized quantum circuit, we proved that our algorithm produces a circuit containing a minimal number of parametrized rotations under the condition that all constant rotations of the circuit are Clifford rotations and that no parameter appears more than once.
Under the same conditions, we also proved that the Hadamard gate optimization algorithm of Reference~\cite{vandaele2024optimal} produces a circuit containing a minimal number of Hadamard gates and internal Hadamard gates.

It was first shown in Reference~\cite{van2024optimal} that specific knowledge of the phases involved in the circuit is required to do better at removing non-Clifford rotation gates.
Our results corroborate this with some interesting additional insights.
In particular, our results hold for any kind of parameter transformations that do not clone parameters.
Notably, we showed that allowing parameter transformations that are discontinuous does not enable better optimization.
For instance, this imply that the Euler angle transformation is not useful to find an equivalent parametrized Clifford circuit with non-repeated parameters containing a lower number of parametrized rotations.
Also, it is interesting to note that specific knowledge of the phases involved in the circuit is required not only to better optimize the number of non-Clifford rotation gates, but also to better optimize the number of Hadamard gates.
This further reinforces the interdependence between the non-Clifford rotation gate optimization problem and the Hadamard gate optimization problem, as previously highlighted in Reference~\cite{vandaele2024optimal}.

An obvious generalization of our results would be to remove one of the two restrictions required to claim optimality.
The first one is that all constant rotations must be Clifford rotations.
It has been proven that allowing non-Clifford constant rotations leads to an NP-hard problem~\cite{van2024optimal}.
Removing the second restriction would consist in allowing the parameters to appear more than once in the circuit.
This problem is conjecture to be NP-hard~\cite{van2024optimal}.
Due to the projected difficulty of these generalizations, we might consider smaller incremental steps.
For instance, we could allow parameters to be repeated only a small number of times.
Another generalization could be to allow repeated parameters only for parametrized Clifford rotations.
Any progress on these problems would bring us closer to an optimal compilation of parametrized quantum circuits.

\section*{Acknowledgments}
We acknowledge funding from the Plan France 2030 through the projects NISQ2LSQ ANR-22-PETQ-0006 and EPIQ ANR-22-PETQ-007.

\bibliographystyle{unsrt}
\bibliography{ref.bib}

\end{document}